\newtheorem{defi}{Definition}{\bfseries}{\itshape}
\newtheorem{theo}{Theorem}{\bfseries}{\itshape}
{\bfseries}{\itshape}
\newtheorem{prop}{Proposition}{\bfseries}{\itshape}
\newtheorem{exam}{Example}{\bfseries}{\itshape}
{\bfseries}{\itshape}
{\bfseries}{\itshape}
\newtheorem{lem}{Lemma}{\bfseries}{\itshape}
\newlength{\ralen}
\newcommand{\back}{\raisebox{\ralen}{\framebox(11,13){\bf\large ?}}}
\newcommand{\card}[1]{\raisebox{\ralen}{\framebox(11,13){#1}}}
\author{
Yuji Hashimoto\thanks{Tokyo Denki University, Japan ({\tt 17rmd23@ms.dendai.ac.jp})}, 
Kazumasa Shinagawa\thanks{Tokyo Institute of Technology, Japan ({\tt shinagawa.k.a@m.titech.ac.jp})}, 
Koji Nuida\thanks{National Institute of Advanced Industrial Science and Technology (AIST), Japan ({\tt k.nuida@aist.go.jp})}, 
Masaki Inamura\thanks{Tokyo Denki University, Japan ({\tt minamura@rd.dendai.ac.jp})}, 
Goichiro Hanaoka\thanks{National Institute of Advanced Industrial Science and Technology (AIST), Japan ({\tt hanaoka-goichiro@aist.go.jp})}
}
\date{\today}
\begin{document}
\title{Secure Grouping Protocol Using a Deck of Cards}

\maketitle
\begin{abstract}
We consider a problem, which we call \emph{secure grouping}, of dividing a number of parties into some subsets (groups) in the following manner: Each party has to know the other members of his/her group, while he/she may not know anything about how the remaining parties are divided (except for certain public predetermined constraints, such as the number of parties in each group).
In this paper, we construct an information-theoretically secure protocol using a deck of physical cards to solve the problem, which is jointly executable by the parties themselves without a trusted third party.
Despite the non-triviality and the potential usefulness of the secure grouping, our proposed protocol is fairly simple to describe and execute.
Our protocol is based on algebraic properties of conjugate permutations.
A key ingredient of our protocol is our new techniques to apply multiplication and inverse operations to hidden permutations (i.e., those encoded by using face-down cards), which would be of independent interest and would have various potential applications.
\end{abstract}

\section{Introduction}

\emph{Multiparty computation} (MPC) is a cryptographic technology that enables two or more parties to jointly compute a multivariate function from their local inputs, in such a way that each party knows the party's local input/output pair but may not know anything about other parties' local inputs and outputs except for those implied by the party's own input/output pair only.
A direction in the study of MPC, which has recently been an active branch in this area, is so-called \emph{card-based protocols}~\cite{Boer89,Kilian93,KochAC15,MizukiCANS16,MizukiUCNC13,MizukiAC12,MizukiIJISEC14,MizukiFUN14,MizukiFAW09,IshikawaUCNC15,Uchiike06,NakaiCANS16,Niemi98,NishidaTAMC15,NishidaTPNC13,NishimuraAsiaPKC16,ShinagawaProvSec15,ShinagawaIWSEC15,ShinagawaIEICE16,Stiglic01,UedaTPNC16}, where protocols for MPC are supposed to use a deck of \emph{physical cards} instead of usual electronic computers.
In a card-based protocol, private information is usually encoded by using face-down cards with mutually indistinguishable back sides, and randomness is introduced by applying \emph{shuffle operations} to some face-down cards.
A typical property is that, in contrast to ordinary computer-based MPC where each party may execute a program at local environment (hence the security has to rely on certain cryptographic techniques, some of which may be only computationally secure), a card-based protocol is supposed to be executed at a public place where the parties can simply monitor and prevent the other parties' adversarial behaviors without any cryptographic machinery.
Consequently, it is usual that card-based protocols provide information-theoretic security.

For card-based protocols, it is known that every function is at least securely computable when ignoring possibly expensive computational costs\cite{Kilian93,MizukiFAW09}.
On the other hand, many efficient card-based protocols specialized to some typical problems have been also investigated.
In those previous studies, the target problem to be solved by card-based protocols was usually a type of problem that already had an efficient computer-based counterpart, such as the case of card-based Millionaires' Problem~\cite{NakaiCANS16}; see the Related Works paragraph below for further details.
In contrast, in this paper we deal with a new type of interesting problem described below, which we call \emph{secure grouping}; for this problem, even a computer-based solution (except ones yielded by naively applying general-purpose MPC protocols) has not been known to the authors' best knowledge.

The secure grouping is defined as the problem of dividing a number of parties into some subsets (called \emph{groups}) in the following manner: Each party has to know the other members of his/her group, while he/she may not know anything about how the remaining parties are divided, except for certain public predetermined constraints such as the number of parties in each group.
For instance, suppose that there are six parties, say, Parties $1,2,\dots,6$, and they wish to randomly divide themselves into three pairs.
Some examples of the possibilities are ($12$, $34$, and $56$), ($14$, $26$, and $35$), ($16$, $23$, and $45$), etc.
Then the goal is to generate one of the all possibilities uniformly at random, while each party has to know who is the partner but may not know about the other two pairs.

It is worth emphasizing that such a secure grouping cannot be achieved by a simple lottery; namely, when each of the six parties in the example above picks up one of the two $\card{$\heartsuit$}$'s, two $\card{$\clubsuit$}$'s, and two $\card{$\diamondsuit$}$'s, there seems to be no simple way for every party to know the other party having the same card without revealing any party's card to the remaining parties.
This suggests that secure grouping is really a non-trivial problem.
We also note that our setting of secure grouping covers various situations, such as the case where $n$ parties wish to randomly select two distinguished persons (like \lq\lq Werewolves'' in the famous Werewolf game) in such a way that only the distinguished persons themselves know who are the distinguished persons; or the dealer in a card game wishes to randomly select a partner from the other players in such a way that only the dealer and the partner him/herself know who is the dealer's partner\footnote{In some card games, the dealer announces one of the cards (e.g., \lq\lq $\spadesuit\,8$'') and then the player having this card becomes the dealer's partner.
However, now the dealer cannot know who is the partner, though the partner him/herself can know that he/she is the dealer's partner; hence the condition of secure grouping is not achieved.}.
The flexibility of secure grouping would be interesting and be potentially useful.

\paragraph{\bf Our Contributions.}

In this paper, we propose a card-based protocol to solve the problem of secure grouping explained above.
As opposed to usual card-based protocols where two kinds of cards (e.g., $\card{$\heartsuit$}$ and $\card{$\clubsuit$}$\,) are used, here we use different cards (with indistinguishable back sides) whose front sides are numbers $\card{$1$}\,, \card{$2$}\,, \dots$, which we call \emph{number cards}.
A face-down card with front side $\card{$k$}$ is called a \emph{commitment} of $k$.
By a rough estimate, our proposed protocol uses approximately $3dn$ number cards where $n$ is the number of parties to be divided into groups and $d$ is the maximal number of parties in a group.

One of our main ideas is to utilize some algebraic properties of conjugate permutations (see Section \ref{permutation_randomizing_protocol} for details).
To intuitively explain, here we consider a case of dividing seven parties into two pairs and one triple.
In this case, we deal with permutations of $1,2,\dots,7$, where a permutation $\sigma$ is encoded as the sequence of number cards with front sides $\sigma^{-1}(1),\sigma^{-1}(2),\dots,\sigma^{-1}(7)$\footnote{Note that this sequence of number cards is obtained by moving, for each $k = 1,2,\dots,7$, the $k$-th card $\card{$k$}$ to the $\sigma(k)$-th position.  For example, if $\sigma(k) = k+1$ for $1 \leq k \leq 6$ and $\sigma(7) = 1$, then the resulting card sequence is $\card{$7$}\,\card{$1$}\,\card{$2$}\,\card{$3$}\,\card{$4$}\,\card{$5$}\,\card{$6$}$.}.
Now we note that a grouping like ($ab$, $cd$, and $efg$) can be represented by a permutation of the form $\tau = (a,b)(c,d)(e,f,g)$, which means that $\tau$ exchanges $a$ and $b$, exchanges $c$ and $d$, and changes $e,f,g$ cyclically to $f,g,e$, respectively.
Then the problem of secure grouping is reduced to generating uniformly at random, in a committed form (i.g., each number card is faced down), a permutation $\rho$ of the same \lq\lq type'' $(*,*)(*,*)(*,*,*)$ and also the square $\rho^2$ of the permutation; once commitments of such $\rho$ and $\rho^2$ are obtained, each party, say Party $i$, can know the other two (or fewer) parties in his/her group by picking up the $i$-th face-down cards for $\rho$ and $\rho^2$.
For example, when $\rho = (1,5)(3,6)(2,7,4)$, the commitments to $\rho$ and $\rho^2$ are given by
\[
\rho = \card{$5$}\,\card{$4$}\,\card{$6$}\,\card{$7$}\,\card{$1$}\,\card{$3$}\,\card{$2$} \quad\mbox{and}\quad \rho^2 = \card{$1$}\,\card{$7$}\,\card{$3$}\,\card{$2$}\,\card{$5$}\,\card{$6$}\,\card{$4$}
\]
(where each card is actually faced down), and then
\begin{itemize}
\item
for example, Party $4$ obtains $\card{$7$}$ and $\card{$2$}$, which tells that Parties $7$ and $2$ are the other members of the group of size $3 = 2 + 1$;
\item
while Party $6$ obtains $\card{$3$}$ and $\card{$6$}$ (the party's own number), which tells that Party $3$ is the other member of the group of size $2 = 1 + 1$.
\end{itemize}
We note that, when the sizes of the groups are at most $d$, a similar process can be done by using permutations $\rho,\rho^2,\dots,\rho^{d-1}$.
Moreover, group theory ensures that the process of randomly shuffling the seven numbers appearing in a given permutation $\tau$ without changing the type is equivalent to taking a conjugate permutation $\sigma^{-1} \tau \sigma$ with random permutation $\sigma$ of the seven numbers.
Then the latter problem can be solved by using a protocol for computing multiplication and inverse of permutations in a committed form; this protocol (see Section \ref{permutation_randomizing_protocol} for details) is also a part of our contribution in this paper, which would be of independent interest.
Secure grouping is now achieved by combining these ideas.
See Section \ref{secure_grouping_protocol} for details.

The \lq\lq plain'' protocol explained above is seemingly applicable only to \lq\lq simple'' types of secure grouping where the parties have \lq\lq symmetric'' roles and the groups with the same number of members have \lq\lq symmetric'' roles as well.
Nevertheless, in fact the idea of the protocol is also applicable to more complex types of secure grouping.
For example, in the aforementioned case of selecting two distinguished persons, we can use secure grouping of type $(*,*)(*)(*) \cdots (*)$ and then the only group with two members specifies the two distinguished persons.
On the other hand, in another aforementioned case of choosing a partner of the dealer (numbered as Player $1$), we can use our secure grouping protocol starting from a permutation $(1,2)(3)(4) \cdots (n)$ and then shuffling all the numbers \emph{except the number $1$} (i.e., the random permutation $\sigma$ is chosen with constraint $\sigma(1) = 1$); now the resulting permutation $\rho$ is of the form $(1,k)(*)(*) \cdots (*)$, the number $k$ on the card picked up by Player $1$ (dealer) specifies the partner, and the partner will pick up the card $\card{$1$}$ which tells that he/she is the dealer's partner.
Moreover, we can also handle the cases where the groups with equal numbers of parties have to be mutually distinguished, by appropriately introducing some dummy number cards indicating the \lq\lq names'' of groups and then shuffling all the numbers except for dummy numbers.
These examples show the flexibility of our proposed protocol.

\paragraph{\bf Related Works.}
It is known that every function can be securely computed based on a deck
of cards~\cite{Kilian93,MizukiFAW09}.
Besides researches for improving general-purpose protocols, the other
important direction is to investigate efficient card-based protocols customized to some useful applications: for
example, the problem of generating secret permutations without fixed points ~\cite{Kilian93,IshikawaUCNC15},
\emph{secure voting}~\cite{MizukiUCNC13,ShinagawaProvSec15}, and
\emph{Millionaires' Problem}~\cite{NakaiCANS16}.
In the early research of card-based cryptography, Cr{\'{e}}peau and
Kilian~\cite{Kilian93} constructed a protocol that randomly selects a permutation with no fixed point
without revealing which one was selected.
It has an application for e.g., exchanging gifts among multiple players in
which each player does not receive his/her own gift.
Recently, Ishikawa et al.~\cite{IshikawaUCNC15} introduced a new shuffle
called a \emph{Pile-Scramble Shuffle} to improve the
protocol in \cite{Kilian93 }.
We use the Pile-Scramble Shuffles in the construction of our protocols.
For the secure voting, Mizuki, Asiedu, and Sone~\cite{MizukiUCNC13}
constructed a protocol for two candidates, which takes $n$ bits as
inputs and outputs the sum of the inputs.
Recently, Shinagawa et al.~\cite{ShinagawaProvSec15} constructed a
secure voting protocol for multiple candidates based on a new type of
cards .
For the Millionaires' Problem, Nakai et al.~\cite{NakaiCANS16} constructed
a protocol, which takes two strings $x, y$ as inputs and outputs a bit indicating whether $x > y$ or not.

\section{Preliminaries}

In this section we prepare necessary tools to construct our secure grouping protocol.                   
We suppose that a distinct number from $1$ to $n$ is assigned to each player in advance, where $n$ is the total number of players, and the correspondence between the numbers and the players is publicly known.
We identify a player with the assigned number. 
Throughout this paper, $S_n$ denotes the group of permutations on the set $\{1,2,\dots,n\}$ of numbers. 

\subsection{Definitions and Properties about Permutations}\label{DTP}

In this subsection, we describe some definitions related to permutations and look at their properties. 

\begin{defi}[cyclic permutation]

A permutation $\tau$ is called a \emph{cyclic permutation} if there are a unique integer $r>1$ and distinct numbers $i_1,i_2,\dots,i_r$ satisfying the following conditions:
\begin{itemize}
\item
We have $\tau(i_1) = i_2,\dots,\tau(i_{r-1}) = i_r$, and $\tau(i_r) = i_1$.
\item
We have $\tau(k) = k$ for any number $k$ different from $i_1,i_2,\dots,i_r$.
\end{itemize}
In this case, we call the permutation $\tau$ a \emph{cycle of length $r$} and write it as $(i_1,i_2,\dots,i_r)$ (or simply $(i_1i_2 \cdots i_r)$ if no ambiguity occurs).
\end{defi}
In the case above, the set $\{i_1, i_2, \ldots, i_r\}$ is called the \emph{cyclic area} of the cyclic permutation $\tau = (i_1,i_2,\dots,i_r)$.
For example, the permutation $\tau \in S_4$ given by $(\tau(1),\tau(2),\tau(3),\tau(4)) = (1,4,2,3)$ is a cycle $(243)$ of length three with cyclic area $\{2,3,4\}$, while $\sigma \in S_4$ given by $(\sigma(1),\sigma(2),\sigma(3),\sigma(4)) = (2,1,4,3)$ is not a cyclic permutation.

We say that two cyclic permutations $\sigma,\tau$ with cyclic areas $C_{\sigma},C_{\tau}$, respectively, are \emph{disjoint} if $C_{\sigma} \cap C_{\tau} = \emptyset$.
For example, the two cyclic permutations $(123)$ and $(45)$ are disjoint, while $(264)$ and $(345)$ are not disjoint.
We note that disjoint cyclic permutations are commutative in the group of permutations.

The following fact about permutations is well-known.

\begin{prop}
Any permutation is uniquely represented by the product of disjoint cyclic permutations.
\end{prop}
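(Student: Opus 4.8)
The plan is to prove both existence (every permutation is \emph{a} product of disjoint cycles) and uniqueness (up to reordering the commuting factors and discarding trivial one-element cycles). For existence I would argue by induction on the number of non-fixed points of a permutation $\pi \in S_n$. If $\pi$ has no non-fixed points then $\pi$ is the identity, the empty product. Otherwise pick any $i_1$ with $\pi(i_1) \neq i_1$ and form the sequence $i_1, \pi(i_1), \pi^2(i_1), \dots$; since $\{1,\dots,n\}$ is finite and $\pi$ is injective, this sequence first repeats at $i_1$ itself, yielding a cycle $\tau = (i_1, \pi(i_1), \dots, \pi^{r-1}(i_1))$ of some length $r > 1$ whose cyclic area $C_\tau$ is a $\pi$-invariant set. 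Then $\pi' := \tau^{-1}\pi$ fixes every element of $C_\tau$ and agrees with $\pi$ off $C_\tau$, so $\pi'$ has strictly fewer non-fixed points; by induction $\pi'$ is a product of disjoint cycles, each supported outside $C_\tau$, and prepending $\tau$ (which commutes with all of them, being disjoint) gives the desired decomposition of $\pi$.

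For uniqueness, the key observation is that the decomposition is forced by the orbit structure of $\pi$. Concretely, suppose $\pi = \tau_1 \cdots \tau_s = \tau_1' \cdots \tau_t'$ are two such representations. For any number $a$ that is moved by $\pi$, exactly one factor on each side moves $a$ (the others fix it, by disjointness), and that factor must be the cycle $(a, \pi(a), \pi^2(a), \dots)$ running through the $\pi$-orbit of $a$ — because within its cyclic area a cycle acts exactly as $\pi$ does, the other factors being the identity there. Hence the factor on the left equals the factor on the right. This shows the two factor multisets coincide (after discarding length-one ``cycles,'' which the definition excludes anyway), which is the asserted uniqueness.

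I would present this cleanly by first isolating the lemma that a cyclic permutation is determined by any one point in its cyclic area together with the rule ``apply $\pi$ repeatedly,'' and the companion fact that disjoint cycles commute (already noted in the excerpt). The main obstacle — really the only subtle point — is stating uniqueness precisely: a product of disjoint cycles is unique only \emph{as an unordered collection}, since disjoint cycles commute, and one must also agree on whether fixed points contribute trivial factors. Once that bookkeeping is pinned down, both halves are short. I expect the existence half to be essentially the induction above and the uniqueness half to be the orbit-matching argument; neither requires any computation beyond tracking where a single element goes under iterated application of $\pi$.
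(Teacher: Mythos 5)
Your proof is correct. Note that the paper itself offers no proof of this proposition --- it is stated as a well-known fact from elementary group theory --- so there is nothing to compare against; your argument (existence by induction on the number of non-fixed points, uniqueness by observing that the factor moving a given element $a$ is forced to be the cycle tracing the $\pi$-orbit of $a$) is the standard textbook one, and you correctly handle the two bookkeeping issues that usually trip people up, namely that uniqueness holds only up to reordering the commuting factors and that fixed points contribute no factors under the paper's definition (which requires cycle length $r>1$).
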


For example, the permutation $\tau \in S_6$ given by $\tau(1) = 2$, $\tau(2) = 3$, $\tau(3) = 1$, $\tau(4) = 4$, $\tau(5) = 6$, and $\tau(6) = 5$ is decomposed into disjoint cycles as $\tau = (123)(56)$.
We also note that, it is convenient to consider as if a permutation $\sigma$ virtually involves \lq\lq cycle $(k)$ of length one'' when $\sigma(k) = k$; by using the abused notation, the permutation $\tau \in S_6$ above can be also represented by $\tau = (123)(4)(56)$.

Next we define the \emph{type of permutation}.
Type of permutation $\tau$ is the data of how many cycles of each length are present in the decomposition of $\tau$ into disjoint cycles as above. 
\begin{defi}[type of permutation]
Let $\tau \in S_n$, which is decomposed into disjoint cycles (including the virtual \lq\lq cycles of length one'' as mentioned above).
For each $i = 1,2,\dots,n$, let $m_i$ denote the number of cycles of length $i$ in the decomposition of $\tau$.
Then we say that $\tau$ is of \emph{type} $\langle 1^{m_1},2^{m_2},\dots,n^{m_n} \rangle$; here the terms $i^{m_i}$ with $m_i = 0$ may be omitted in the notation.
\end{defi}
Note that $\langle 1^{m_1}, 2^{m_2}, \ldots, n^{m_n} \rangle$ can be also viewed as the set of permutations of type $\langle 1^{m_1}, 2^{m_2}, \ldots, n^{m_n} \rangle$.
For example the permutation $\tau = (13)(25)(798) = (13)(25)(4)(6)(798) \in S_9$ belongs to the set $\langle 1^2, 2^2, 3^1 \rangle$. 

\subsection{Number Cards}
We use cards with numbers written on the front since these are convenient for treating permutations of numbers $1,2,\dots,n$ directly\footnote{
Usually, we define coding rules such as $\card{$\clubsuit$}\card{$\heartsuit$} = 0$ and $\card{$\heartsuit$}\card{$\clubsuit$} = 1$ 
since the card-based protocol normally uses Boolean values. 
If the usual Boolean encoding rule is used instead of the number cards, the secure grouping protocol can still be executed.
In the case the number of cards increases $2\lceil \log_2{n} \rceil $ times larger. }. 
We call the cards \emph{number cards} and write them as below.
\begin{align*}
\begin{array}{ccccc}
\card{1} & \card{2} & \cdots & \card{$n$}   
\end{array}
\end{align*}
The backs of number cards are indistinguishable.
We denote the back of a number card by \card{\back}\,.
A face-down card is called \emph{commitment}, and an operation to flip a face-down card into a face-up card is called \emph{open}.
Using the number cards, permutations in $S_n$ are represented by a card sequence $(x_1, x_2, \ldots, x_n)$ in a certain way explained later.

We also use the term ``permutation" as an \emph{operation} for card sequences. 
That is, we say ``applying a permutation $\sigma$ to a card sequence $x$" in the sense that rearranging $x$ according to $\sigma$, formally defined as follows. 

\begin{defi}[applying a permutation to a card sequence]
Let $\sigma \in S_n$ be a permutation and let $x = (x_1, x_2, \ldots, x_n)$ be a card sequence. 
We define a card sequence $\sigma(x)$ obtained by applying the permutation $\sigma$ to the sequence $x$ by 
\[
\sigma (x) := (x_{\sigma^{-1}(1)}, x_{\sigma^{-1}(2)}, \ldots, x_{\sigma^{-1}(n)}).
\]
\end{defi}
In other words this operation moves each $i$-th card to the $\sigma(i)$-th position.
For example, when $\sigma = (13)(265)(4)(7) \in S_7$ and $x = (x_1,\dots,x_7)$, we have $\sigma(x) = (x_3,x_5,x_1,x_4,x_6,x_2,x_7)$.
For the special case, the identity permutation ${\sf id}_n \in S_n$ is the identity operation such that a card sequence $(x_1, x_2, \ldots, x_n)$ is moved to a card sequence $(x_1, x_2, \ldots, x_n)$ itself.

\begin{defi}[card sequence representing a permutation]
Let $\sigma \in S_n$ be a permutation.  We define the card sequence for permutation $\sigma$ to be the card sequence $\sigma(\,\card{$1$},\card{$2$},\dots,\card{$n$}\,)$ obtained by applying $\sigma$ to the card sequence $x = (x_1,x_2,\dots,x_n)$ with $x_i = \card{$i$}\,$, $i = 1,2,\dots,n$.
\end{defi}
For example a permutation $\tau=(12)(34)(567)\in S_7$ is represented by the following card sequence
\begin{align*}
\begin{array}{ccccccc}
\card{2} & \card{1} & \card{4} & \card{3} &  \card{7} & \card{5} &  \card{6} \enspace .
\end{array}
\end{align*}

\subsection{Pile-Scramble Shuffle}

A \emph{shuffle}, which is an operation to apply a random permutation chosen from some distribution, plays an important role in card-based cryptography. 
While different types of shuffles are proposed and used for various applications, we use one of the shuffles called \emph{Pile-Scramble Shuffles}. 
It is proposed by Ishikawa et al. \cite{IshikawaUCNC15} and believed to be an ``efficient shuffle" since it has an easy implementation by e.g., utilizing physical envelopes.
\begin{defi}[Pile-Scramble Shuffle]
Let $n\geq 1$ be any integer.
The \emph{Pile-Scramble Shuffle of degree $n$} is the operation that takes a card sequence $x = (x_1,x_2,\dots,x_n)$ and outputs $r(x) = (x_{r^{-1}(1)}, x_{r^{-1}(2)}, \ldots, x_{r^{-1}(n)})$ where $r\in S_n$ is a random permutation and hidden from all parties.
\end{defi} 
Pile-Scramble Shuffle is described by using the following notation:
\begin{align*}
\begin{array}{||c|c|c|c||c}
\card{\back} & \card{\back} & \ldots & \card{\back} & (x) 
\end{array}
\rightarrow
\begin{array}{ccccc}
\card{\back} & \card{\back} & \ldots & \card{\back} & (r(x)).
\end{array}
\end{align*}
We also define a similar operation for the case where each component $x_i$ of $x$ is not a single card but some other object, such as a collection of multiple cards.

\section{Permutation Randomizing Protocol}\label{permutation_randomizing_protocol}
In this section, we present a new protocol called \emph{permutation randomizing protocol} which is used as the main building block in our secure grouping protocol.
This section is our main technical contribution part.
In the simplest situation for our protocol, given an input permutation $\tau$ that is publicly known, this protocol outputs a committed card sequence representing a random permutation of the same type as $\tau$. 
We emphasize that this functionality cannot be achieved by using naive shuffles since the Pile-Scramble Shuffle in general changes the type of a permutation. 
Therefore, we need to realize an operation on permutations that does not change the type.
The key mathematical fact here is that any permutation $\rho$ that is conjugate to a permutation $\tau$ has the same type as $\tau$.
More precisely, we utilize the following well-known property in group theory:
  
\begin{lem}
\label{lem:expression_of_conjugation}
Let $\pi \in S_n$ be any permutation, which is expressed as the decomposition into disjoint cyclic permutations.
Let $\nu \in S_n$, and let $\pi'$ denote the permutation obtained by changing each number $j$ appearing in the expression of $\pi$ to the number $\nu^{-1}(j)$.
Then we have $\pi' = \nu^{-1} \pi \nu$.
\end{lem}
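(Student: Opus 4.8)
The plan is to verify the identity $\pi' = \nu^{-1}\pi\nu$ by evaluating both sides on an arbitrary input, using the disjoint-cycle structure of $\pi$ as the organizing principle. First I would recall the basic conjugation rule for a single cycle: if $c = (i_1, i_2, \ldots, i_r)$ is a cyclic permutation, then $\nu^{-1} c \,\nu$ sends $\nu^{-1}(i_k)$ to $\nu^{-1}(i_{k+1})$ (indices mod $r$) and fixes every other number; that is, $\nu^{-1} c\, \nu = (\nu^{-1}(i_1), \nu^{-1}(i_2), \ldots, \nu^{-1}(i_r))$. This is exactly the statement that replacing each symbol $j$ in the cycle notation by $\nu^{-1}(j)$ yields the conjugate. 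The proof of this single-cycle case is a direct computation: for a number of the form $\nu^{-1}(i_k)$, apply $\nu$ to get $i_k$, then $c$ to get $i_{k+1}$, then $\nu^{-1}$ to get $\nu^{-1}(i_{k+1})$; for a number $m$ not of this form, $\nu(m)$ is none of the $i_k$, so $c$ fixes it and the whole composite fixes $m$.

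Next I would extend this from a single cycle to a product of disjoint cycles. Write $\pi = c_1 c_2 \cdots c_t$ as its decomposition into disjoint cyclic permutations. Since conjugation by $\nu$ is a group automorphism of $S_n$, we have $\nu^{-1}\pi\nu = (\nu^{-1} c_1 \nu)(\nu^{-1} c_2 \nu)\cdots(\nu^{-1} c_t \nu)$. By the single-cycle case, each factor $\nu^{-1} c_\ell \nu$ is the cycle obtained from $c_\ell$ by applying $\nu^{-1}$ to every symbol. Moreover, since the original cyclic areas are pairwise disjoint and $\nu^{-1}$ is a bijection, the new cyclic areas $\nu^{-1}(C_{c_\ell})$ are also pairwise disjoint, so the product of these conjugated cycles is precisely the permutation described in the statement as $\pi'$ — namely the one whose disjoint-cycle expression is obtained by rewriting each number $j$ in the expression of $\pi$ as $\nu^{-1}(j)$. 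Assembling these two observations gives $\pi' = \nu^{-1}\pi\nu$.

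I do not expect a serious obstacle here; the main thing to be careful about is bookkeeping with the direction of the conjugation and the inverse — it is easy to accidentally prove $\nu\pi\nu^{-1}$ equals the rewrite by $\nu$ rather than $\nu^{-1}$, so I would double-check the composition order (recalling that $(\nu^{-1}\pi\nu)(x) = \nu^{-1}(\pi(\nu(x)))$) against a small example, e.g. the one already in the excerpt. A secondary point worth making explicit is that the ``virtual cycles of length one'' are handled uniformly by the same argument, so that fixed points of $\pi$ map correctly to fixed points of $\pi'$; this is automatic from the single-cycle computation applied to the degenerate case, or can simply be noted since a number fixed by $\pi$ is sent by $\nu^{-1}\pi\nu$ to $\nu^{-1}(\nu(x)) = x$ whenever $\nu(x)$ is fixed by $\pi$. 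One could alternatively give the whole proof in one shot by evaluating $\nu^{-1}\pi\nu$ directly on an arbitrary $x$ and tracking which cycle of $\pi$ the element $\nu(x)$ lies in, but the factor-by-factor automorphism argument is cleaner to write and makes the disjointness bookkeeping transparent.
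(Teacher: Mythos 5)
Your proof is correct. It takes a slightly different route from the paper's: you first establish the single-cycle conjugation formula $\nu^{-1} c\, \nu = (\nu^{-1}(i_1), \ldots, \nu^{-1}(i_r))$ by direct computation, then assemble the general case using the fact that conjugation is an automorphism together with the observation that $\nu^{-1}$ preserves disjointness of cyclic areas. The paper instead gives a single pointwise verification: for arbitrary $a$ with $b := \pi'(a)$, it notes that $b$ is cyclically next to $a$ in the expression of $\pi'$, hence $\nu(b)$ is cyclically next to $\nu(a)$ in the expression of $\pi$, so $\pi(\nu(a)) = \nu(b)$ and $\nu^{-1}\pi\nu(a) = b$. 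The two arguments are equivalent in substance — your "one shot" alternative mentioned at the end is essentially the paper's proof — but your factor-by-factor version makes the cycle-type preservation (the fact actually needed downstream, since the protocol relies on conjugates having the same type) completely explicit, at the cost of invoking the homomorphism property and a disjointness check that the paper's direct evaluation avoids. Your attention to the direction of conjugation (rewriting by $\nu^{-1}$, not $\nu$) and to the length-one cycles matches the paper's conventions.
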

\begin{proof}
Let $a \in \{1,2,\dots,n\}$ and let $b := \pi'(a)$.
Then $b$ is (cyclically) next to $a$ in the expression of $\pi'$ as the decomposition into disjoint cyclic permutations.
By the definition of $\pi'$, this implies that $\nu(b)$ is (cyclically) next to $\nu(a)$ in the expression of $\pi$, which means that $\pi(\nu(a)) = \nu(b)$.
Hence we have $\nu^{-1}\pi\nu(a) = \nu^{-1}(\nu(b)) = b$, therefore $\pi'$ and $\nu^{-1}\pi\nu$ are equal as permutations.
\end{proof}

\subsection{Permutation Division Protocol}
Here we propose a protocol, called the \emph{permutation division protocol}, which is the main ingredient of our permutation randomizing protocol.    
Given committed card sequences for permutations $v,w \in S_n$ as inputs, this protocol outputs the committed card sequence for permutation $v^{-1}w \in S_n$.
As explained later, this protocol enables us to generate a committed card sequence for a permutation $\sigma^{-1} \tau \sigma$ as in Lemma \ref{lem:expression_of_conjugation} from given card sequences for $\sigma,\tau$.

This protocol is composed of  four steps as follows.
Here, for any permutation $x$, we write $``(x)$" to mean that the displayed card sequence in a figure is the committed card sequence for $x$, while we also write $x$ to indicate that the displayed card sequence is the opened card sequence for $``x$".
\begin{enumerate}
\item Arrange the committed card sequences for $v$ and $w$ as in the figure below. 
\begin{align*}
\begin{array}{ccccc}
\card{\back} & \card{\back} & \ldots & \card{\back} & (v) \\
\card{\back} & \card{\back} & \ldots & \card{\back} & (w) 
\end{array}
\end{align*}

\item Apply Pile-Scramble Shuffle to the first and the second rows simultaneously,
\begin{align*}
\begin{array}{||c|c|c|c||c}
\card{\back} & \card{\back} & \ldots & \card{\back} & (v) \\
\card{\back} & \card{\back} & \ldots & \card{\back} & (w) 
\end{array}
\rightarrow
\begin{array}{ccccc}
\card{\back} & \card{\back} & \ldots & \card{\back} & (rv) \\
\card{\back} & \card{\back} & \ldots & \card{\back} & (rw) 
\end{array}
\end{align*}
where $r \in S_n$ is a uniformly random permutation.     

\item Open the first row, which reveals the permutation $rv$.
Then apply the permutation $(rv)^{-1} = v^{-1}r^{-1}$ to the second row.
More precisely, the latter operation can be efficiently performed by rearranging the $n$ columns of the two rows in a way that the first row becomes the sequence $(1,2,\dots,n)$ representing $\mathsf{id}_n \in S_n$ where $\card{*}$ denote a face-up card having some $i \in \{1, 2, \ldots, n\}$.
\begin{align*}
&\begin{array}{ccccc}
\card{*} & \card{*} & \ldots & \card{*} & rv\\
\card{\back} & \card{\back} & \ldots & \card{\back} & (rw)
\end{array}
\\
\rightarrow
&\begin{array}{ccccc}
\card{1} & \card{2} & \ldots & \card{$n$} & \mathsf{id}_n\\
\card{\back} & \card{\back} & \ldots & \card{\back} & (v^{-1} r^{-1} rw)
\end{array}
\end{align*}

\item Output the second row (note that now $v^{-1} r^{-1} r w = v^{-1} w$). 
\begin{align*}
\begin{array}{ccccc}
\card{\back} & \card{\back} & \ldots & \card{\back} & (v^{-1}w)
\end{array}
\end{align*}

\end{enumerate}

The correctness of our protocol has been explained above.
On the other hand, the following property holds for the security of our protocol.

\begin{prop}
\label{prop:security_permutation_division}
The distribution of the only data available during the protocol, which is the card sequence for $rv \in S_n$ opened at Step 3, is uniform and is independent of $v$ and $w$.
\end{prop}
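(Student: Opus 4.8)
The plan is to show that the opened card sequence at Step~3 encodes the permutation $rv$, where $r$ is the uniformly random permutation introduced by the Pile-Scramble Shuffle in Step~2, and then to observe that multiplying a fixed permutation $v$ on the right by a uniformly random $r$ yields a uniformly random element of $S_n$, regardless of the value of $v$. Concretely, I would first recall that the Pile-Scramble Shuffle of Step~2 picks $r \in S_n$ uniformly at random and independently of everything else (in particular independently of the inputs $v$ and $w$), and that the only information the parties observe during the entire protocol is the face-up sequence revealed in Step~3, since Steps~1, 2, and 4 involve only face-down cards and a public shuffle, and the rearrangement in Step~3 is a deterministic function of the already-revealed data.

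Next I would make precise the claim that the revealed sequence is exactly the card sequence for $rv$. This follows from the bookkeeping already set up in the excerpt: the input row is the card sequence for $v$, applying the Pile-Scramble Shuffle with permutation $r$ sends it to the card sequence for $rv$ by the definition of applying a permutation to a card sequence, and opening this row in Step~3 reveals $rv$. So the distribution of the observed data is precisely the distribution of the random variable $rv$.

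Then comes the core algebraic point, which is elementary: for any fixed $v \in S_n$, the map $r \mapsto rv$ is a bijection of $S_n$ onto itself (its inverse is $s \mapsto s v^{-1}$), so it pushes the uniform distribution on $r$ forward to the uniform distribution on $rv$. Hence $rv$ is uniformly distributed over $S_n$, and since this conclusion does not depend on $v$ at all, and $r$ was chosen independently of $v$ and $w$, the distribution of $rv$ is independent of both $v$ and $w$. This establishes the proposition.

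I do not anticipate a genuine obstacle here; the argument is short. The only point requiring a little care is the claim that $rv$ (equivalently, the opened sequence) is the \emph{only} data available to the parties during the protocol — one must check that nothing else is leaked, in particular that Step~3's rearrangement is publicly determined by the already-opened first row and therefore conveys no additional information, and that the Pile-Scramble Shuffle hides $r$ itself. Once that is granted, invariance of the uniform measure under right translation in the group $S_n$ finishes the proof immediately.
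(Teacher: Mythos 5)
Your argument is correct and is essentially the paper's own proof: the paper likewise observes that for each $u \in S_n$ there is exactly one $r$ with $rv = u$ (namely $r = uv^{-1}$), which is your right-translation bijection, and concludes that $rv$ is uniform and independent of $v,w$. The extra care you take in checking that the Step~3 rearrangement leaks nothing beyond the opened row is a reasonable elaboration but not a different route.
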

\begin{proof}
Indeed, for any $u \in S_n$, the number of the possible choice of the uniformly random $r$ that satisfies $rv = u$ is $1$ (i.e., $r = uv^{-1}$).
Hence, the permutation $rv$ appearing at Step 3 is uniformly random and independent of $v,w$, as desired.
\end{proof}

\subsection{Permutation Randomizing Protocol}
Here we describe our permutation randomizing protocol.
Given an input permutation $\tau$ that is publicly known, this protocol outputs a committed card sequence representing a random permutation of the same type as $\tau$.
In addition to the degree $n$ of permutations, our protocol in a general form takes an integer $k \geq 1$ (which is the number of input permutations) and a subset $I$ of $\{1,2,\dots,n\}$ as public parameters; we call the set $I$ as the \emph{fixing set} of our protocol.
By introducing the fixing set, we can, for example, use our secure grouping protocol starting from a permutation $(1,2)(3)(4) \cdots (n)$ and then shuffling all the numbers \emph{except the number $1$} (i.e., the random permutation $\sigma$ is chosen with constraint $\sigma(1) = 1$).
Such a generalized setting for the protocol here is required in our construction of the secure grouping protocol that flexibly covers various situations.

Let $\tau_1,\tau_2,\dots,\tau_k \in S_n$ be publicly known inputs for the protocol.
Then our permutation randomizing protocol with fixing set $I$ is performed as follows.
In the figures below, we consider an example where $n = 5$, $k = 2$, and $I = \{1,3\}$.
\begin{enumerate}
\item
Arrange $2k$ times the opened cards for numbers in $\{1,2,\dots,n\} \setminus I$ in increasing order, and face down the cards.
\begin{align*}
\begin{array}{ccc}
\card{2}  & \card{4} & \card{5} \\
\card{2}  & \card{4} & \card{5} \\
\card{2}  & \card{4} & \card{5} \\
\card{2}  & \card{4} & \card{5}
\end{array} 
\rightarrow
\begin{array}{ccc}
\card{\back}  & \card{\back} & \card{\back} \\
\card{\back}  & \card{\back} & \card{\back} \\
\card{\back}  & \card{\back} & \card{\back} \\
\card{\back}  & \card{\back} & \card{\back}
\end{array} 
\end{align*} 
\item
Apply Pile-Scramble Shuffle to the $2k$ rows simultaneously.
\item \label{item:generating_sigma}
For each of $2k$ rows, insert the opened cards for numbers in $I$ to the row in a way that the number card $\card{$a$}$ for $a \in I$ is at the $a$-th column.
Then face down all the inserted cards.
Note that the resulting committed card sequences represent the same (partially shuffled) permutation in $S_n$, say $\sigma$.
\begin{align*}
\begin{array}{|c|c|c|c|c|}
\card{1} & \card{\back} & \card{3} & \card{\back} & \card{\back} \\
\card{1} & \card{\back} & \card{3} & \card{\back} & \card{\back} \\
\card{1} & \card{\back} & \card{3} & \card{\back} & \card{\back} \\
\card{1} & \card{\back} & \card{3} & \card{\back} & \card{\back}
\end{array} 
\rightarrow
\begin{array}{||c|c|c|c|c||c}
\card{\back} & \card{\back} & \card{\back} & \card{\back} & \card{\back} & (\sigma) \\
\card{\back} & \card{\back} & \card{\back} & \card{\back} & \card{\back} & (\sigma) \\
\card{\back} & \card{\back} & \card{\back} & \card{\back} & \card{\back} & (\sigma) \\
\card{\back} & \card{\back} & \card{\back} & \card{\back} & \card{\back} & (\sigma)
\end{array} 
\end{align*} 
\item
For each $i = 1,2,\dots,k$, apply the permutation $\tau_i$ to one of the committed card sequences for $\sigma$ generated above.
\begin{align*}
\begin{array}{||c|c|c|c|c||c}
\card{\back} & \card{\back} & \card{\back} & \card{\back} & \card{\back} & (\sigma) \\
\card{\back} & \card{\back} & \card{\back} & \card{\back} & \card{\back} & (\sigma)
\end{array} 
\rightarrow
\begin{array}{||c|c|c|c|c||c}
\card{\back} & \card{\back} & \card{\back} & \card{\back} & \card{\back} & (\tau_1 \sigma) \\
\card{\back} & \card{\back} & \card{\back} & \card{\back} & \card{\back} & (\tau_2 \sigma)
\end{array} 
\end{align*} 
\item
For each $i = 1,2,\dots,k$, perform the permutation division protocol for committed card sequences for $\sigma$ and $\tau_i \sigma$.
Then output the resulting sequences.
\begin{align*}
\begin{array}{||c|c|c|c|c||cc||c|c|c|c|c||c}
\card{\back} & \card{\back} & \card{\back} & \card{\back} & \card{\back} & (\sigma) & \ & \card{\back} & \card{\back} & \card{\back} & \card{\back} & \card{\back} & (\tau_1 \sigma) \\
\card{\back} & \card{\back} & \card{\back} & \card{\back} & \card{\back} & (\sigma) & \ & \card{\back} & \card{\back} & \card{\back} & \card{\back} & \card{\back} & (\tau_2 \sigma) \\
\end{array} 
\rightarrow
\begin{array}{||c|c|c|c|c||c}
\card{\back} & \card{\back} & \card{\back} & \card{\back} & \card{\back} & (\sigma^{-1} \tau_1 \sigma) \\
\card{\back} & \card{\back} & \card{\back} & \card{\back} & \card{\back} & (\sigma^{-1} \tau_2 \sigma)
\end{array} 
\end{align*} 
\end{enumerate}
We note that the (committed) permutation $\sigma$ generated in Step \ref{item:generating_sigma} is a uniformly random permutation in $S_n$ satisfying that $\sigma(j) = j$ for every $j \in I$.
For the security of the protocol, the following property is deduced straightforwardly from Proposition \ref{prop:security_permutation_division}.

\begin{prop}
\label{prop:security_permutation_randomizing}
The distribution of the only data available during the protocol, which is the $k$ card sequences opened during the permutation division protocols at Step 5, is uniform and is independent of the permutations $\sigma$ and $\sigma^{-1} \tau_i \sigma$ for $i = 1,2,\dots,k$.
\end{prop}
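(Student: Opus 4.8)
The plan is to reduce the statement to Proposition~\ref{prop:security_permutation_division}, which already tells us everything we need about a single invocation of the permutation division protocol. First I would identify precisely what data becomes visible to the parties during the whole permutation randomizing protocol. Steps~1--4 only involve arranging, facing down, shuffling, and rearranging cards whose face values are never revealed (the cards opened in Steps~1 and~3 are publicly predetermined values, carrying no information about the hidden randomness), so the only opened cards occur inside the $k$ executions of the permutation division protocol at Step~5. In the $i$-th such execution, the inputs are the committed sequences for $v = \sigma$ and $w = \tau_i\sigma$, and by Proposition~\ref{prop:security_permutation_division} the unique opened sequence is that of $r_i v = r_i\sigma$ for a fresh uniformly random $r_i \in S_n$, and this sequence is uniform and independent of $(v,w) = (\sigma, \tau_i\sigma)$.

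The key step is then to argue that the joint distribution of the $k$ opened sequences $(r_1\sigma, \dots, r_k\sigma)$ is uniform on $S_n^k$ and independent of $\sigma$ and of all the $\sigma^{-1}\tau_i\sigma$. Here I would emphasize that the $k$ Pile-Scramble Shuffles used in the $k$ separate division protocols draw independent random permutations $r_1, \dots, r_k$, each independent of everything chosen before (in particular of $\sigma$ and of the $\tau_i$, which are public). Conditioning on any fixed value of $\sigma$ (equivalently, on any fixed values of $\sigma$ and the $\tau_i\sigma$), each $r_i\sigma$ is, by the same counting argument as in the proof of Proposition~\ref{prop:security_permutation_division} (for each target $u_i$ there is exactly one $r_i$ with $r_i\sigma = u_i$, namely $r_i = u_i\sigma^{-1}$), uniform on $S_n$, and the $r_i\sigma$ are mutually independent because the $r_i$ are. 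Since this conditional distribution does not depend on the value of $\sigma$, the opened data is (unconditionally) uniform on $S_n^k$ and independent of $\sigma$; and since the outputs $\sigma^{-1}\tau_i\sigma$ are deterministic functions of $\sigma$ and the public $\tau_i$, independence from $\sigma$ yields independence from the $\sigma^{-1}\tau_i\sigma$ as well.

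The only genuinely delicate point — and the step I expect to require the most care — is the bookkeeping of what is and is not revealed outside Step~5: one must check that applying a publicly known permutation $\tau_i$ to a committed sequence (Step~4) and the card rearrangements inside the division protocol (Step~3 of that subprotocol) leak nothing beyond the already-accounted-for opened sequence, and that the cards opened in Steps~1 and~3 of the randomizing protocol are the fixed values $\{1,\dots,n\}\setminus I$ and $I$ respectively, hence carry zero information. Once this is nailed down, the proposition follows, and I would phrase the final argument as a short paragraph invoking Proposition~\ref{prop:security_permutation_division} together with the independence of the $k$ shuffle randomnesses $r_1,\dots,r_k$.

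\begin{proof}[Proof sketch]
The only cards opened during the protocol are: in Steps~1 and~3 of the randomizing protocol, the publicly fixed number cards for $\{1,\dots,n\}\setminus I$ and for $I$, which carry no information; and, in each of the $k$ invocations of the permutation division protocol at Step~5, the single sequence opened at its Step~3. All other manipulations (Pile-Scramble Shuffles, facing cards down, applying the public permutations $\tau_i$, and the column rearrangements inside the division protocol) reveal no face values. In the $i$-th division protocol the inputs are the committed sequences for $\sigma$ and $\tau_i\sigma$, and it uses a fresh uniformly random permutation $r_i\in S_n$ drawn independently of everything chosen so far (in particular of $\sigma$ and of the public $\tau_i$); by Proposition~\ref{prop:security_permutation_division} the opened sequence is that of $r_i\sigma$. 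Fix any value of $\sigma$. For any target $u_i\in S_n$ there is exactly one $r_i$ with $r_i\sigma = u_i$, namely $r_i = u_i\sigma^{-1}$, so $r_i\sigma$ is uniform on $S_n$; and since $r_1,\dots,r_k$ are mutually independent, so are $r_1\sigma,\dots,r_k\sigma$. As this conditional distribution is the uniform distribution on $S_n^k$ regardless of the value of $\sigma$, the opened data is unconditionally uniform on $S_n^k$ and independent of $\sigma$. Finally, each output $\sigma^{-1}\tau_i\sigma$ is a deterministic function of $\sigma$ and the public permutation $\tau_i$, so independence of the opened data from $\sigma$ implies independence from $\sigma^{-1}\tau_1\sigma,\dots,\sigma^{-1}\tau_k\sigma$ as well.
\end{proof}
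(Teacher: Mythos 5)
Your proposal is correct and follows exactly the route the paper intends: the paper gives no explicit proof, merely asserting that the claim is ``deduced straightforwardly from Proposition~\ref{prop:security_permutation_division},'' and your argument is precisely the fleshed-out version of that reduction (independent fresh shuffle randomness $r_1,\dots,r_k$ in the $k$ division subprotocols, the counting argument $r_i = u_i\sigma^{-1}$, and the observation that the remaining opened cards are publicly fixed values). No gaps; your bookkeeping of what is revealed outside Step~5 is a welcome addition the paper omits.
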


\section{Secure Grouping Protocol}\label{secure_grouping_protocol}
In this section we present a construction of a secure grouping protocol, which is based on the permutation randomizing protocol described above.
See also \lq\lq Our Contributions'' paragraph in the introduction for an intuitive idea of our construction of the protocol.
\subsection{Our Setting for Grouping}
Before presenting our proposed secure grouping protocol, here we clarify our setting for the grouping problem.
We suppose that there are $n$ players, indexed by numbers $1,2,\dots,n$, to be divided into groups.
In our secure grouping protocol, the number of groups with $k$ members for each $k \geq 1$, denoted by $\mathcal{M}(k)$, should be determined in advance and is treated as public information.
Note that the integers $\mathcal{M}(k)$ satisfy that $\mathcal{M}(k) \geq 0$ for each $k \geq 1$ and $\sum_{k \geq 1} \mathcal{M}(k) = n$.
We may express $\mathcal{M}$ by the sequence $(\mathcal{M}(1),\mathcal{M}(2),\dots,\mathcal{M}(k))$ where $k$ is the maximal integer satisfying $\mathcal{M}(k) > 0$.

Our protocol can also handle a certain kind of constraints on the groupings, specified in the following manner.
For each integer $k \geq 1$, let $\mathcal{C}_k$ be a (possibly empty) set of non-empty subsets of $\{1,2,\dots,n\}$.
Let $\mathcal{C}$ be the sequence of $\mathcal{C}_1,\mathcal{C}_2,\dots$.
The meaning of a constraint $\mathcal{C}$ is the following:
\begin{itemize}
\item
For each $k \geq 1$ and each $C \in \mathcal{C}_k$, the players in $C$ must belong to the same group of size $k$.
\item
For any $k,k' \geq 1$, $C \in \mathcal{C}_k$, and $C' \in \mathcal{C}_{k'}$, if $C \neq C'$, then the players in $C$ and the players in $C'$ must belong to different groups.
\end{itemize}
Accordingly, the sets $\mathcal{C}_1,\mathcal{C}_2,\dots$ must satisfy the following conditions:
\begin{itemize}
\item
For any $C \in \mathcal{C}_k$, we have $1 \leq |C| \leq k$.
\item
For any $C \in \mathcal{C}_k$ and $C' \in \mathcal{C}_{k'}$ with $k \neq k'$, the subsets $C,C'$ of $\{1,2,\dots,n\}$ must be (different and) disjoint with each other.
\item
For any $C,C' \in \mathcal{C}_k$, $C$ and $C'$ must be disjoint unless these are equal.
\item
For any $k \geq 1$, we have $|\mathcal{C}_k| \leq \mathcal{M}(k)$.
\end{itemize}
Such a constraint $\mathcal{C}$ should also be specified in advance and is also treated as public information in our proposed protocol.

We define a \emph{grouping} of $n$ players to be a partition $\mathcal{G}$ of $\{1,2,\dots,n\}$, that is, a set of disjoint non-empty subsets of $\{1,2,\dots,n\}$ satisfying that the union of all sets in $\mathcal{G}$ is $\{1,2,\dots,n\}$.
For each $k \geq 1$, let $\mathcal{G}_k$ denote the (possibly empty) sets of all $A \in \mathcal{G}$ with $|A| = k$.
We say that a grouping $\mathcal{G}$ \emph{satisfies} a constraint $(\mathcal{M},\mathcal{C})$, if the followings hold:
\begin{itemize}
\item
We have $|\mathcal{G}_k| = \mathcal{M}(k)$ for any $k \geq 1$.
\item
If $k \geq 1$ and $C \in \mathcal{C}_k$, then there is a unique group $A$ in $\mathcal{G}_k$ satisfying $C \subset A$; we sometimes write this group $A$ as $A[C]$.
\item
If $k \geq 1$ and $C,C' \in \mathcal{C}_k$ are different, then we have $A[C] \neq A[C']$.
\end{itemize}
Note that the conditions for $\mathcal{C}$ and $\mathcal{M}$ introduced above ensure that the constraint can be satisfied by at least one grouping.
In our proposed secure grouping protocol, each player $P \in \{1,2,\dots,n\}$ will only receive the information on the (unique) set $A \in \mathcal{G}$ with $P \in A$; we sometimes write this group $A$ as $A[P]$.
We give some examples of the situation above for the sake of explanation.

\begin{exam}
We consider a case of grouping of nine players into three groups with three members, with constraints that Players $8$ and $9$ want to be in the same group while Player $1$ does not want to be in the same group as them.
This situation can be expressed by $\mathcal{M} = (0,0,3)$, $\mathcal{C}_1 = \mathcal{C}_2 = \emptyset$, and $\mathcal{C}_3 = \{\{1\},\{8,9\}\}$.
Then an example of a grouping is given by $\mathcal{G} = \mathcal{G}_3 = \{\{1,4,6\},\{2,5,7\},\{3,8,9\}\}$.
\end{exam}

\begin{exam}
\label{ex:Werewolf_situation_simple}
We consider a situation to classify five players into two distinguished persons and three ordinary persons in the following manner: each distinguished person is told who is the other distinguished person; while each ordinary person is not told who are the distinguished persons, nor who are the other ordinary persons.
This situation can be realized by treating each of the three ordinary persons as an individual group of size one consisting of him/herself alone, while treating the two distinguished persons naturally as a (unique) group of size two.
Accordingly, we set $\mathcal{M} = (3,1)$ and set each $\mathcal{C}_k$ to be an empty set.
Then an example of a grouping is given by $\mathcal{G} = \{\{2\},\{4\},\{5\},\{1,3\}\}$ (hence $\mathcal{G}_1 = \{\{2\},\{4\},\{5\}\}$ and $\mathcal{G}_2 = \{\{1,3\}\}$); this means that Players $2$, $4$, and $5$ are ordinary persons, and Players $1$ and $3$ are the distinguished persons.
%
\end{exam}

\begin{exam}
\label{ex:Werewolf_situation_extended}
We consider a slightly more complicated situation where seven players are classified into two \lq\lq Role A'' players, one \lq\lq Role B'' player, two \lq\lq Role C'' players, and two ordinary players.
The additional requirements are as follows:
\begin{itemize}
\item
Each player with Role A and each player with Role C are told his/her own role, are told who is the other player with the same role as him/herself, but are told nothing about the remaining players' roles.
\item
The player with Role B and each ordinary player are told his/her own role, but are told nothing about the remaining players' roles.
\end{itemize}
In contrast to Example \ref{ex:Werewolf_situation_simple} where the ordinary and the distinguished persons can be distinguished just by the sizes of the groups (one for the former, and two for the latter), here we should distinguish Role B from the ordinary players (both would be represented by size-one groups) and Role C from Role A (both would be represented by size-two groups).

A solution is to introduce dummy indices $8$ representing \lq\lq Role B'' and $9$ representing \lq\lq Role C''.
Namely, we divide the nine numbers into one group consisting of the dummy index $8$ and a player's index (who becomes \lq\lq Role B''), one group consisting of the dummy index $9$ and two players' indices (who become \lq\lq Role C''), two groups consisting of a player's index only (who becomes \lq\lq ordinary player''), and one group consisting of two players' indices only (who become \lq\lq Role A'').
Accordingly, we set the constraint to be $\mathcal{M} = (2,2,1)$, $\mathcal{C}_1 = \emptyset$, $\mathcal{C}_2 = \{\{8\}\}$, and $\mathcal{C}_3 = \{\{9\}\}$.
An example of a grouping is given by $\mathcal{G}_1 = \{\{1\},\{6\}\}$, $\mathcal{G}_2 = \{\{2,7\},\{4,8\}\}$, and $\mathcal{G}_3 = \{\{3,5,9\}\}$; this means that Players $1$ and $6$ are ordinary players, Players $2$ and $7$ are the Role A players, Player $4$ is the Role B player, and Players $3$ and $5$ are the Role C players.
We note that similar ideas to introduce dummy indices representing \lq\lq names of groups'' can be applied to the case of more complicated groupings.
\end{exam}
\subsection{Secure Grouping Protocol for Simpler Case}\label{subsection:grouping_protocol_two}

Before describing our proposed secure grouping protocol in a general form, here we consider a simpler case with empty constraints, that is, the sets $\mathcal{C}_k$ for specifying constraints for the groupings are all empty.
This case includes the case mentioned in Example \ref{ex:Werewolf_situation_simple} above.

Here we suppose that the number $n$ of players for the secure grouping and the group size function $\mathcal{M}$ (as well as the empty constraint sets $\mathcal{C}_k$) are determined in advance and are public information.
As a pre-computation part of the protocol, the players compute a permutation $\tau \in S_n$ as follows; note that this $\tau$ is also a public information, therefore the computation of $\tau$ does not need any secure computation protocol.
Let $k$ denote the maximal integer with $\mathcal{M}(k) > 0$.
First, the players compute integers $a_0,a_1,\dots,a_{k-1}$ recursively by $a_0 := 0$ and $a_i := a_{i-1} + i \cdot \mathcal{M}(i)$ for $1 \leq i \leq k-1$.
Then the players define $\tau$ to be the product of cyclic permutations
\[
(a_{i-1} + (j-1)i + 1\quad a_{i-1} + (j-1)i + 2\quad \cdots\quad a_{i-1} + (j-1)i + i)
\]
for all $1 \leq i \leq k$ and $1 \leq j \leq \mathcal{M}(i)$.
We note that this permutation $\tau$ is of type $\langle r_1{}^{\mathcal{M}(r_1)},r_2{}^{\mathcal{M}(r_2)},\dots,r_{\ell}{}^{\mathcal{M}(r_{\ell})} \rangle$ where $r_1,r_2,\dots,r_{\ell}$ are the integers at which the function $\mathcal{M}$ takes a positive value.
For example, if $\mathcal{M} = (3,2,0,1)$, then we have
\[
\tau
= (1)(2)(3)(4\ 5)(6\ 7)(8\ 9\ 10\ 11)
= (4\ 5)(6\ 7)(8\ 9\ 10\ 11)
\in \langle 1^3,2^2,4^1 \rangle \enspace.
\]

We also note that, our protocol below utilizes the permutation randomizing protocol introduced in Section \ref{permutation_randomizing_protocol} with empty fixing set $I = \emptyset$ as a sub-protocol.
This sub-protocol is given a number of publicly known permutations $\tau_1,\tau_2,\dots,\tau_{\ell} \in S_n$ as inputs, and outputs committed card sequences for permutations $\rho_1,\rho_2,\dots,\rho_{\ell} \in S_n$, where $\rho_i = \sigma^{-1} \tau_i \sigma$ with common and uniformly random permutation $\sigma \in S_n$ for each $1 \leq i \leq \ell$.

Then, given the data above including the permutation $\tau$, the main part of our secure grouping protocol is executed as follows, where $k$ denotes as above the maximal integer with $\mathcal{M}(k) > 0$ (which is equal to the maximal length of cyclic permutations involved in $\tau$):
\begin{enumerate}
\item
The players jointly execute the permutation randomizing protocol (with empty fixing set $I = \emptyset$) for input permutations $\tau,\tau^2,\dots,\tau^{k-1}$, and obtain the committed card sequences $x[\rho],x[\rho^2],\dots,x[\rho^{k-1}]$ for permutations $\rho,\rho^2,\dots,\rho^{k-1}$ with $\rho = \sigma^{-1} \tau \sigma$ (note that $\sigma^{-1} \tau^j \sigma = (\sigma^{-1} \tau \sigma)^j$ for any $j$).
\item
Each Player $i$ picks the $i$-th card $x[\rho^j]_i$ of the card sequence $x[\rho^j]$ for all $1 \leq j \leq k-1$.
Then the numbers (except the number $i$ itself) written on the front of these $k-1$ cards (that may be duplicated) show the other players in Player $i$'s group.
\end{enumerate}

For example, if $\tau \in S_{11}$ is as above and $\sigma = (1\ 8)(2\ 6\ 3\ 7\ 10)(4\ 11) \in S_{11}$ is chosen in the protocol, then we have $k = 4$, $\rho = (1\ 9\ 7\ 4)(2\ 3)(5\ 11)$, and the card sequences satisfy
\[
\begin{split}
\mbox{fronts of } x[\rho] &= \card{4}\ \card{3}\ \card{2}\ \card{7}\ \card{11}\ \card{6}\ \card{9}\ \card{8}\ \card{1}\ \card{10}\ \card{5} \enspace,\\
\mbox{fronts of } x[\rho^2] &= \card{7}\ \card{2}\ \card{3}\ \card{9}\ \card{5}\ \card{6}\ \card{1}\ \card{8}\ \card{4}\ \card{10}\ \card{11} \enspace,\\
\mbox{fronts of } x[\rho^3] &= \card{9}\ \card{3}\ \card{2}\ \card{1}\ \card{11}\ \card{6}\ \card{4}\ \card{8}\ \card{7}\ \card{10}\ \card{5} \enspace.
\end{split}
\]
Then Player $3$ takes the cards $\card{2}$\,, $\card{3}$\,, and $\card{2}$\,, therefore the player's group is $\{2,3\}$.
On the other hand, Player $4$ takes the cards $\card{7}$\,, $\card{9}$\,, and $\card{1}$\,, therefore the player's group is $\{1,4,7,9\}$.

\subsection{Secure Grouping Protocol for General Case}\label{subsection:grouping_protocol_general}

From now, we describe our secure grouping protocol in a general case where the constraint set $\mathcal{C}_k$ may be non-empty.
We note that these sets $\mathcal{C}_k$ are also determined in advance and publicly known.
Now the pre-computation part to determine a public permutation $\tau \in S_n$ is executed as follows, where $k$ denotes the maximal integer with $\mathcal{M}(k) > 0$:
\begin{itemize}
\item
Initialize $\tau$ and auxiliary counters $B$ by $\tau \leftarrow \mathsf{id}_n$ and $B \leftarrow \{1,2,\dots,n\} \setminus \bigcup_{j=1}^{k} \bigcup_{A \in \mathcal{C}_j} A$.
Then do the following for each $\lambda = 1,2,\dots,k$:
\begin{itemize}
\item
Do the following for each $\mu = 1,2,\dots,\mathcal{M}(\lambda)$:
\begin{itemize}
\item
If $\mathcal{C}_{\lambda}$ contains a set, say $C = \{a_1,a_2,\dots,a_{\ell}\}$, then update $\tau$ and $B$ by $\tau \leftarrow \tau \cdot (a_1\ a_2\ \cdots\ a_{\ell}\ b_1\ b_2\ \cdots\ b_{\lambda - \ell})$ and $B \leftarrow B \setminus \{b_1,b_2,\dots,b_{\lambda - \ell}\}$, where $b_1,b_2,\dots,b_{\lambda - \ell}$ are the first $\lambda - \ell$ elements of the set $B$; and then remove the set $C$ from $\mathcal{C}_{\lambda}$.
\item
If $\mathcal{C}_{\lambda}$ is empty, then update $\tau$ and $B$ by $\tau \leftarrow \tau \cdot (b_1\ b_2\ \cdots\ b_{\lambda})$ and $B \leftarrow B \setminus \{b_1,b_2,\dots,b_{\lambda}\}$, where $b_1,b_2,\dots,b_{\lambda}$ are the first $\lambda$ elements of the set $B$.
\end{itemize}
\end{itemize}
\end{itemize}
This procedure is constructed to ensure that the resulting $\tau$ is a permutation in $S_n$ and satisfies the constraint $(\mathcal{M},\mathcal{C})$.
For example, if $n = 9$, $\mathcal{M} = (2,2,1)$ and $\mathcal{C}$ are as in Example \ref{ex:Werewolf_situation_extended}, then the computation above is performed as follows:
\begin{quote}
(Initialize) $\tau = \mathsf{id}_9$, $\mathcal{C}_1 = \emptyset$, $\mathcal{C}_2 = \{\{8\}\}$, $\mathcal{C}_3 = \{\{9\}\}$, $B = \{1,2,3,4,5,6,7\}$ \\
$\to$ ($\lambda = 1$, $\mu = 1$) $\tau = (1) = \mathsf{id}_9$, $\mathcal{C}_1 = \emptyset$, $\mathcal{C}_2 = \{\{8\}\}$, $\mathcal{C}_3 = \{\{9\}\}$, $B = \{2,3,4,5,6,7\}$ \\
$\to$ ($\lambda = 1$, $\mu = 2$) $\tau = (2) = \mathsf{id}_9$, $\mathcal{C}_1 = \emptyset$, $\mathcal{C}_2 = \{\{8\}\}$, $\mathcal{C}_3 = \{\{9\}\}$, $B = \{3,4,5,6,7\}$ \\
$\to$ ($\lambda = 2$, $\mu = 1$) $\tau = (8\ 3)$, $\mathcal{C}_1 = \mathcal{C}_2 = \emptyset$, $\mathcal{C}_3 = \{\{9\}\}$, $B = \{4,5,6,7\}$ \\
$\to$ ($\lambda = 2$, $\mu = 2$) $\tau = (8\ 3)(4\ 5)$, $\mathcal{C}_1 = \mathcal{C}_2 = \emptyset$, $\mathcal{C}_3 = \{\{9\}\}$, $B = \{6,7\}$ \\
$\to$ ($\lambda = 3$, $\mu = 1$) $\tau = (8\ 3)(4\ 5)(9\ 6\ 7)$, $\mathcal{C}_1 = \mathcal{C}_2 = \mathcal{C}_3 = \emptyset$, $B = \emptyset$
\end{quote}

We also note that, our protocol below utilizes the permutation randomizing protocol with fixing set $I = \bigcup_{j=1}^{k} \bigcup_{A \in \mathcal{C}_j} A \subset \{1,2,\dots,n\}$ introduced in Section \ref{permutation_randomizing_protocol}.
This sub-protocol is given publicly known permutations $\tau_1,\tau_2,\dots,\tau_{\ell} \in S_n$ as inputs, and outputs committed card sequences for permutations $\rho_1,\rho_2,\dots,\rho_{\ell} \in S_n$, where for each $i$, $\rho_i = \sigma^{-1} \tau_i \sigma$ with common and uniformly random permutation $\sigma \in S_n$ satisfying that $\sigma(a) = a$ for every $a \in I$.

Then, given the data above including the permutation $\tau$, the main part of our secure grouping protocol is executed as follows, where $k$ denotes as above the maximal integer with $\mathcal{M}(k) > 0$:
\begin{enumerate}
\item
The players jointly execute the permutation randomizing protocol with fixing set $I$ for input permutations $\tau,\tau^2,\dots,\tau^{k-1}$, and obtain the committed card sequences $x[\rho],x[\rho^2],\dots,x[\rho^{k-1}]$ for permutations $\rho,\rho^2,\dots,\rho^{k-1}$ with $\rho = \sigma^{-1} \tau \sigma$ (note that $\sigma^{-1} \tau^j \sigma = (\sigma^{-1} \tau \sigma)^j$ for any $j$).
\item
Each Player $i$ picks the $i$-th card $x[\rho^j]_i$ of the card sequence $x[\rho^j]$ for all $1 \leq j \leq k-1$.
Then the numbers (except the number $i$ itself) written on the front of these $k-1$ cards (that may be duplicated) show the other players in Player $i$'s group.
\end{enumerate}
We note that, if $\mathcal{C}_i = \emptyset$ for any $1 \leq i \leq k$, then the protocol above coincides with the protocol described in Section \ref{subsection:grouping_protocol_two}.

\section{Proofs of Correctness and Security}
In this section, we prove the correctness and the security of our proposed secure grouping protocol.

\subsection{Proof of Correctness}
In this subsection, we prove the correctness of our secure grouping protocol as follows:
\begin{theo}
Let $(\mathcal{M},\mathcal{C})$ be a possible constraint for our protocol.
Then our secure grouping protocol with constraint $(\mathcal{M},\mathcal{C})$ generates each grouping $\mathcal{G}$ satisfying the constraint $(\mathcal{M},\mathcal{C})$ with equal probability.
\end{theo}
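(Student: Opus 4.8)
The plan is to describe the grouping that the protocol outputs as an explicit function of the hidden random permutation $\sigma$ drawn inside the permutation randomizing protocol, and then to argue that this grouping is uniformly distributed over the set $\mathcal{X}$ of all groupings satisfying $(\mathcal{M},\mathcal{C})$. Concretely: the randomizing protocol is run with fixing set $I=\bigcup_{j\ge 1}\bigcup_{A\in\mathcal{C}_j}A$ on inputs $\tau,\tau^2,\dots,\tau^{k-1}$, so by its specification it delivers committed sequences for $\rho,\rho^2,\dots,\rho^{k-1}$ with $\rho=\sigma^{-1}\tau\sigma$ and $\sigma$ uniform over the subgroup $F:=\{\sigma\in S_n:\sigma(a)=a\text{ for all }a\in I\}$. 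By Lemma~\ref{lem:expression_of_conjugation}, the disjoint-cycle expression of $\rho$ is that of $\tau$ with every entry $a$ replaced by $\sigma^{-1}(a)$, so the collection $\mathcal{A}$ of cyclic areas of $\tau$ (a partition of $\{1,\dots,n\}$) is carried to the collection of cyclic areas of $\rho$, namely $\sigma^{-1}(\mathcal{A}):=\{\sigma^{-1}(A):A\in\mathcal{A}\}$. Since every cycle of $\tau$, hence of $\rho$, has length at most $k$, the $k-1$ cards read by Player $i$ together with $i$ itself exhaust the orbit of $i$ under $\rho$; thus all players agree, and the grouping produced is exactly $\sigma^{-1}(\mathcal{A})$. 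It therefore remains to show that $\sigma^{-1}(\mathcal{A})$ is uniform over $\mathcal{X}$ when $\sigma$ is uniform over $F$, and since $\sigma\mapsto\sigma^{-1}$ is a bijection of $F$ this is the same as showing $\pi(\mathcal{A})$ is uniform over $\mathcal{X}$ for $\pi$ uniform over $F$.

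To see the latter I would let $F$ act on groupings by $\pi\cdot\mathcal{G}=\{\pi(A):A\in\mathcal{G}\}$ and prove that the $F$-orbit of $\mathcal{A}$ is exactly $\mathcal{X}$. Granting this, for any fixed $\mathcal{G}\in\mathcal{X}$ and any $\pi_0\in F$ with $\pi_0(\mathcal{A})=\mathcal{G}$ the fiber $\{\pi\in F:\pi(\mathcal{A})=\mathcal{G}\}$ equals the coset $\pi_0\,\mathrm{Stab}_F(\mathcal{A})$, whose size $|\mathrm{Stab}_F(\mathcal{A})|$ does not depend on $\mathcal{G}$; hence $\pi(\mathcal{A})$ is uniform over $\mathcal{X}$, which by the previous paragraph finishes the proof. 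The inclusion $F\cdot\mathcal{A}\subseteq\mathcal{X}$ is easy: any $\pi\in F$ fixes $I$ pointwise and merely permutes $T:=\{1,\dots,n\}\setminus I$, so $|\pi(A)|=|A|$ and $\pi(A)\cap I=A\cap I$ for all $A$; combined with the fact (guaranteed by the recursive construction of $\tau$) that $\mathcal{A}$ already lies in $\mathcal{X}$, one reads off directly that $\pi(\mathcal{A})$ again has $\mathcal{M}(k)$ groups of each size $k$ and still places each constrained set $C\in\mathcal{C}_k$ inside a unique size-$k$ group, with distinct constrained sets in distinct groups.

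The real work is the reverse inclusion $\mathcal{X}\subseteq F\cdot\mathcal{A}$. Here I would first establish the common combinatorial shape of members of $\mathcal{X}$: from the imposed conditions on $\mathcal{C}$ one checks that $I$ is the disjoint union of the sets in $\bigcup_j\mathcal{C}_j$, that for every $\mathcal{G}\in\mathcal{X}$ and $C\in\mathcal{C}_k$ the group $A[C]$ meets $I$ in precisely $C$, and that all the other groups of $\mathcal{G}$ lie entirely in $T$; consequently, for each size $k$, $\mathcal{G}$ has exactly $|\mathcal{C}_k|$ size-$k$ groups carrying a constrained set and exactly $\mathcal{M}(k)-|\mathcal{C}_k|$ size-$k$ groups contained in $T$, and the same is true of $\mathcal{A}$ because $\mathcal{A}\in\mathcal{X}$. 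Given $\mathcal{G}\in\mathcal{X}$, I would then match the groups of $\mathcal{A}$ with those of $\mathcal{G}$ so that the group of $\mathcal{A}$ containing $C$ is matched with $A[C]$ for each $C$, and size-$k$ $T$-groups of $\mathcal{A}$ are matched bijectively with size-$k$ $T$-groups of $\mathcal{G}$; within each matched pair the $I$-parts already coincide and the $T$-parts have equal cardinality, so choosing any bijection between the $T$-parts of each pair and taking the union over all pairs yields a permutation of $T$, which extended by the identity on $I$ is the desired $\pi\in F$ with $\pi(\mathcal{A})=\mathcal{G}$.

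I expect the bookkeeping in this last step --- pinning down exactly which groups of a constrained grouping carry an $I$-part and checking that the matching is consistent with $\pi$ fixing $I$ pointwise --- to be the main obstacle; everything else is either a direct application of Lemma~\ref{lem:expression_of_conjugation} and the already-proved properties of the randomizing protocol, or the standard orbit--stabilizer counting. A secondary point to treat carefully is that the recursive construction of $\tau$ really does output a permutation of $S_n$ lying in $\mathcal{X}$, which I would verify by induction on the nested loops over $\lambda$ and $\mu$, tracking that the auxiliary set $B$ always contains enough fresh elements of $T$ to complete each cycle.
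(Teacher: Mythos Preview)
Your argument is correct, and it reaches the conclusion by a route that is close in spirit to the paper's but organized differently. The paper introduces an auxiliary notion of a \emph{permutation satisfying the constraint} $(\mathcal{M},\mathcal{C})$ (which, beyond the cycle type, fixes a specific cyclic ordering on each constrained set $C$), then splits the argument into two lemmas: first, that $\rho=\sigma^{-1}\tau\sigma$ is uniform over this set of constraint-satisfying permutations; second, that every grouping $\mathcal{G}\in\mathcal{X}$ is hit by the same number of such permutations. Combining these yields uniformity on $\mathcal{X}$.

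You instead bypass this intermediate set entirely: you observe that the protocol's output depends on $\sigma$ only through the partition $\sigma^{-1}(\mathcal{A})$, and then run orbit--stabilizer for the action of $F$ directly on groupings. This is a genuine streamlining for the correctness proof, since you never need to define or count constraint-satisfying permutations; your transitivity step ($\mathcal{X}\subseteq F\cdot\mathcal{A}$) does explicitly the work that the paper leaves as ``deduced straightforwardly''. On the other hand, the paper's intermediate object---the full permutation $\rho$ with its cycle ordering, not just the underlying partition---is exactly what it reuses in the security proof (Theorem~2), where Player $i$ learns the cycle $\rho_i$ itself and not merely its support. So the paper's detour buys a shared vocabulary between the correctness and security arguments, while your route is cleaner if correctness is the only goal.
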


To prove the theorem, we introduce some auxiliary definitions.
First, let $\pi \in S_n$ be a permutation and let $\pi = \pi_1 \pi_2 \cdots \pi_{\ell}$ be the decomposition of $\pi$ into disjoint cyclic permutations $\pi_1,\dots,\pi_{\ell}$, where the cyclic permutations of length $1$ are also included in the decomposition.
Then we define the grouping $\mathcal{G}[\pi]$ specified by $\pi$ to be the set of the cyclic areas of the cyclic permutations $\pi_1,\pi_2,\dots,\pi_{\ell}$.
For example, if $\pi = (1\,5)(4)(2\,6\,3) \in S_6$, then $\mathcal{G}[\pi] = \{\{4\},\{1,5\},\{2,3,6\}\}$.

Secondly, we say that a permutation $\pi \in S_n$ satisfies the constraint $(\mathcal{M},\mathcal{C})$, if the following conditions are satisfied:
\begin{itemize}
\item
Let $r_1 < r_2 < \cdots < r_L$ be all the positive integers with $\mathcal{M}(r_i) > 0$.
Then $\pi \in \langle r_1{}^{\mathcal{M}(r_1)},r_2{}^{\mathcal{M}(r_2)},\dots,r_L{}^{\mathcal{M}(r_L)} \rangle$.
\item
Let $k \geq 1$ and $C = \{a_1,a_2,\dots,a_h\} \in \mathcal{C}_k$ (we assume that the elements $a_1,a_2,\dots,a_h$ of any set $C \in \mathcal{C}_k$ are always written in increasing order, in our argument below as well as the construction of the secure grouping algorithm).
Then the numbers $a_1,a_2,\dots,a_h$ are involved in the cyclic area of the same cyclic permutation in the decomposition of $\pi$, say $\pi_i$, and we have $\pi_i(a_j) = a_{j+1}$ for any $1 \leq j \leq h-1$.
\end{itemize}
We note that, if $\pi \in S_n$ satisfies the constraint $(\mathcal{M},\mathcal{C})$, then the grouping $\mathcal{G}[\pi]$ satisfies the constraint $(\mathcal{M},\mathcal{C})$ as well.
We note also that, by the construction, the permutation $\tau \in S_n$ computed in the pre-computation part of our secure grouping protocol with constraint $(\mathcal{M},\mathcal{C})$ satisfies the constraint $(\mathcal{M},\mathcal{C})$ in the sense above.

Now we show the following property:

\begin{lem}
\label{lem:correspondence_of_permutation_and_grouping}
Let $\rho \in S_n$ be the permutation generated (in the committed form) in our secure grouping protocol.
Then the output of our secure grouping protocol is $\mathcal{G}[\rho]$.
\end{lem}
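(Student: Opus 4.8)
The plan is to unwind the definitions and track what each player actually picks up. Recall that the protocol generates (in committed form) a permutation $\rho = \sigma^{-1}\tau\sigma$ together with its powers $\rho^2,\dots,\rho^{k-1}$, and that Player $i$ reads off the $i$-th entries of the card sequences $x[\rho],x[\rho^2],\dots,x[\rho^{k-1}]$. By the definition of the card sequence representing a permutation, the $i$-th entry of $x[\rho^j]$ is the number $\rho^j(i)$ (since $x[\rho^j] = \rho^j(\card{1},\dots,\card{n}) = (\card{$(\rho^j)^{-1}(1)$},\dots)$ — I will need to be careful here whether the convention gives $\rho^j(i)$ or $(\rho^j)^{-1}(i)$ in the $i$-th slot; either way the set collected is the same). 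So the set of numbers Player $i$ obtains, discarding $i$ itself, is $\{\rho^j(i) : 1 \le j \le k-1\}$ (equivalently including $j=0$, which just adds $i$).

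**First I would** establish the key combinatorial fact: for the cyclic permutation $\rho_m$ in the disjoint-cycle decomposition of $\rho$ whose cyclic area $A$ contains $i$, the orbit $\{\rho^j(i) : j \ge 0\}$ equals $A$. This is immediate because disjoint cycles commute, so $\rho^j$ acts on $A$ exactly as $\rho_m^j$ does, and a cycle of length $r = |A|$ acts transitively on its cyclic area. Moreover, since every group has size at most $k$ (as $\mathcal{M}(\lambda) = 0$ for $\lambda > k$, and $\rho$ has the same type as $\tau$), we have $|A| \le k$, so the powers $j = 0,1,\dots,k-1$ already sweep out all of $A$: $\{\rho^j(i) : 0 \le j \le k-1\} = A$. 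Hence the set reported by Player $i$ is exactly $A = A[i]$, the group of $i$ in $\mathcal{G}[\rho]$. Collecting this over all players $i$ shows the output of the protocol is precisely the partition $\mathcal{G}[\rho]$, which is the claim of Lemma~\ref{lem:correspondence_of_permutation_and_grouping}.

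**The one point that needs care** — and the closest thing to an obstacle — is the index/convention bookkeeping: confirming that after Step~1 the committed sequence $x[\rho^j]$ really does have $\rho^j(i)$ (or its inverse) in position $i$, given the chain of operations (the permutation division protocol outputs $(\sigma^{-1}\tau_i\sigma)$ applied to $(\card{1},\dots,\card{n})$, and one must check that applying $\tau_i = \tau^j$ to a sequence representing $\sigma$ and then dividing by $\sigma$ yields the sequence representing $\sigma^{-1}\tau^j\sigma$ exactly, not some inverse or transpose thereof). This is routine given the definitions of "applying a permutation to a card sequence" and "card sequence representing a permutation" together with Lemma~\ref{lem:expression_of_conjugation}, but it is where a sign error would hide. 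I would state this as an explicit sub-claim, verify it by composing the definitions once, and then the rest of the argument is the short orbit computation above. Note that the grouping-theoretic fact that $\mathcal{G}[\rho]$ satisfies the constraint $(\mathcal{M},\mathcal{C})$, and that the distribution of $\rho$ makes each such grouping equally likely, is not needed for this lemma — it is the content of the surrounding correctness theorem — so the proof of Lemma~\ref{lem:correspondence_of_permutation_and_grouping} can be kept entirely at the level of "what does Player $i$ physically collect."
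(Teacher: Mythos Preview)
Your proposal is correct and follows essentially the same line as the paper's own proof: identify the cycle $\rho_i$ containing $i$, observe that the $i$-th card of $x[\rho^j]$ carries $(\rho^j)^{-1}(i) = \rho_i^{-j}(i)$ (the paper pins down this convention rather than hedging as you do, but as you note the resulting \emph{set} is the same either way), and then use that the cycle length is at most $k$ so the powers $j=0,\dots,k-1$ exhaust the cyclic area. Your extra caution about verifying the output convention of the permutation division sub-protocol is reasonable but not something the paper spells out; otherwise the arguments coincide.
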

\begin{proof}
Let $k$ denote the integer specified in the construction of the protocol.
Let $i \in \{1,2,\dots,n\}$, and let $\rho_i$ denote the cyclic permutation in the decomposition of $\rho$ whose cyclic area contains $i$.
Then, by the definition of the card sequence representing a permutation, the numbers written on the cards obtained by Player $i$ at the end of the protocol are $(\rho^j)^{-1}(i) = \rho^{-j}(i) = \rho_i{}^{-j}(i)$ for $j = 1,2,\dots,k-1$.
Moreover, by the definition of $k$, the length of the cyclic permutation $\rho_i$ is at most $k$, therefore the set of those numbers $\rho_i{}^{-j}(i)$ for $j = 1,2,\dots,k-1$ together with the number $i$ itself is equal to the group in $\mathcal{G}[\rho]$ containing $i$, the latter being the cyclic area of $\rho_i$ by definition.
Hence the claim holds.
\end{proof}

By Lemmas \ref{lem:expression_of_conjugation}, \ref{lem:correspondence_of_permutation_and_grouping} and the fact that the (partially shuffled) permutation $\sigma \in S_n$ generated in the permutation randomizing protocol fixes each element of the fixing set $I$, it follows that the output of our secure grouping algorithm is a grouping satisfying the given constraint $(\mathcal{M},\mathcal{C})$.

Moreover, since the permutation $\sigma \in S_n$ generated in the permutation randomizing protocol is chosen uniformly at random from all the permutations in $S_n$ that fixes every element of $I$, the following property is deduced straightforwardly by Lemma \ref{lem:expression_of_conjugation}:

\begin{lem}
Given the input $\tau,\tau^2,\dots,\tau^{k-1}$ for the permutation randomizing protocol executed internally in our secure grouping protocol, the (committed) permutations $\rho,\rho^2,\dots,\rho^{k-1}$ corresponding to the output of the permutation randomizing protocol satisfy that $\rho$ is uniformly random over the set of all permutations in $S_n$ satisfying the constraint $(\mathcal{M},\mathcal{C})$.
\end{lem}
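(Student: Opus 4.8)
The plan is to recognize the set
$T := \{\pi \in S_n \mid \pi \text{ satisfies } (\mathcal{M},\mathcal{C})\}$
as a single orbit under a group action, and then conclude by a standard orbit-counting argument. Write $G := \{\nu \in S_n \mid \nu(a) = a \ \forall a \in I\}$ for the pointwise stabilizer of the fixing set $I = \bigcup_{j}\bigcup_{A \in \mathcal{C}_j} A$. By the construction of the protocol and the remark immediately preceding the lemma, the permutation $\sigma$ produced inside the permutation randomizing protocol is uniformly distributed over $G$, and $\rho = \sigma^{-1}\tau\sigma$. I would prove: (i) $G$ acts on $T$ via $\sigma : \pi \mapsto \sigma^{-1}\pi\sigma$ (in particular $\tau \in T$); (ii) this action is transitive; and (iii) hence $\rho$ is uniform on $T$.

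For (i), Lemma \ref{lem:expression_of_conjugation} says that $\sigma^{-1}\pi\sigma$ is obtained from $\pi$ by relabelling every number $j$ appearing in the disjoint-cycle expression of $\pi$ to $\sigma^{-1}(j)$. Such a relabelling preserves all cycle lengths, hence preserves the type; and since $\sigma \in G$ fixes every element of every $C \in \mathcal{C}_k$, those elements and their relative cyclic positions within their cycle are untouched, so the second clause of the definition of ``satisfies $(\mathcal{M},\mathcal{C})$'' is inherited. Thus $\sigma^{-1}\pi\sigma \in T$ whenever $\pi \in T$ and $\sigma \in G$; the composition law for conjugation makes this a group action; and $\tau \in T$ by the design of the pre-computation part.

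For (ii), given $\pi \in T$ I construct $\mu \in G$ such that relabelling the disjoint-cycle expression of $\tau$ by $\mu$ yields that of $\pi$, so that $\sigma := \mu^{-1} \in G$ satisfies $\pi = \sigma^{-1}\tau\sigma$ by Lemma \ref{lem:expression_of_conjugation}. Each $C \in \mathcal{C}_k$ lies in a unique cycle of $\tau$ and a unique cycle of $\pi$, both of length $k$ and both listing the elements of $C$ consecutively in the same order; I match these two cycles, forcing $\mu$ to fix the elements of $C$ (consistent, as $C \subseteq I$), and extend $\mu$ over the remaining entries of this cycle of $\tau$ onto the remaining entries of the matched cycle of $\pi$, respecting the cyclic order. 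For the cycles of $\tau$ disjoint from $I$, I group them by length; since $\tau$ and $\pi$ have the same type and their $I$-meeting cycles are already paired up, they have equally many non-$I$ cycles of each length, and I pair these off arbitrarily, extending $\mu$ by any cyclic-order-respecting bijection on each pair. The point that must be checked is that every entry of an $I$-meeting cycle that does not lie in its anchoring set $C$ in fact lies outside $I$: otherwise such an entry would belong to some $C' \in \mathcal{C}_{j'}$, forcing $C'$ into the same group as $C$, contradicting that distinct constraint sets occupy distinct groups. Granting this, all entries moved by $\mu$ lie outside $I$, so $\mu \in G$.

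For (iii), transitivity makes $\sigma \mapsto \sigma^{-1}\tau\sigma$ a surjection $G \to T$ whose fibre over any $\pi \in T$ is a coset of the stabilizer $\{\sigma \in G \mid \sigma^{-1}\tau\sigma = \tau\}$; hence all fibres have equal cardinality, and a uniform $\sigma \in G$ yields a uniform $\rho = \sigma^{-1}\tau\sigma \in T$, which is exactly the claim. I expect the main obstacle to be Step (ii): one has to organize the cycle-matching carefully and, above all, verify that the non-anchor entries of the anchored cycles (and all entries of the other cycles) avoid $I$ and that the non-anchored cycle counts of $\tau$ and $\pi$ agree, so that the constructed $\mu$ genuinely lies in $G$; once transitivity is established, Step (iii) is the routine orbit-stabilizer computation.
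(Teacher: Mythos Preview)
Your proposal is correct and follows the same conjugation-based route the paper intends: the paper's own ``proof'' consists of the single remark that the lemma is ``deduced straightforwardly by Lemma~\ref{lem:expression_of_conjugation}'' together with the observation that $\sigma$ is uniform on the pointwise stabilizer of $I$. You supply precisely the details the paper omits---namely, that the conjugation action of $G$ on $T$ is well-defined and transitive, and then the orbit--stabilizer count---so your argument is a rigorous completion of the paper's sketch rather than a different approach.
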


On the other hand, the following property is deduced straightforwardly by the definition of the grouping $\mathcal{G}[\pi]$ specified by a permutation $\pi$:

\begin{lem}
Let $(\mathcal{M},\mathcal{C})$ be a given constraint.
For any grouping $\mathcal{G}$ satisfying the constraint $(\mathcal{M},\mathcal{C})$, the number of permutations $\pi$ that satisfies the constraint $(\mathcal{M},\mathcal{C})$ and satisfies $\mathcal{G}[\pi] = \mathcal{G}$ is independent of the choice of $\mathcal{G}$.
\end{lem}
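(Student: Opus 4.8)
The plan is to count, for a fixed grouping $\mathcal{G}$ satisfying $(\mathcal{M},\mathcal{C})$, the permutations $\pi$ with $\mathcal{G}[\pi]=\mathcal{G}$ that satisfy the constraint, and to show this count depends only on $(\mathcal{M},\mathcal{C})$ and not on $\mathcal{G}$. The key observation is that a permutation $\pi$ with $\mathcal{G}[\pi]=\mathcal{G}$ is obtained by choosing, independently for each group $A\in\mathcal{G}$, a cyclic permutation on the cyclic area $A$ (i.e.\ a cyclic ordering of the elements of $A$). So the total count factors as a product over the groups $A\in\mathcal{G}$ of the number of admissible cyclic orderings of $A$.

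First I would set up this bijective correspondence carefully: given $\mathcal{G}$, specifying $\pi$ with $\mathcal{G}[\pi]=\mathcal{G}$ is the same as specifying, for each $A\in\mathcal{G}$ with $|A|=k$, a cycle of length $k$ on the set $A$, and the disjointness of the $A$'s guarantees these choices combine into a single well-defined permutation of the required type $\langle r_1^{\mathcal{M}(r_1)},\dots,r_L^{\mathcal{M}(r_L)}\rangle$. Next I would incorporate the constraint $\mathcal{C}$: by definition $\mathcal{G}$ satisfying $(\mathcal{M},\mathcal{C})$ means each $C\in\mathcal{C}_k$ lies in a unique group $A[C]\in\mathcal{G}_k$, and distinct $C$'s give distinct groups. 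The constraint on $\pi$ says that for the cycle $\pi_i$ on $A[C]$, the elements $a_1,\dots,a_h$ of $C$ (in increasing order) must appear consecutively in that cyclic order with $\pi_i(a_j)=a_{j+1}$. So for each group $A$, the number of admissible cyclic orderings depends only on $|A|=k$ and on whether $A$ contains some $C\in\mathcal{C}_k$ and, if so, on $|C|$: a group of size $k$ with no constraint set inside it admits $(k-1)!$ cyclic orderings, while a group of size $k$ containing a constraint set $C$ of size $h$ admits $(k-h)!$ orderings (collapse the forced block $a_1\cdots a_h$ to a single symbol, leaving $k-h+1$ symbols to arrange cyclically, giving $(k-h)!$; one checks the edge case $h=k$ gives $1$, and $h=1$ gives $(k-1)!$, consistent with the unconstrained count).

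Finally I would assemble the product. For each $k\geq 1$, among the $\mathcal{M}(k)$ groups of size $k$, exactly $|\mathcal{C}_k|$ of them contain a (distinct) constraint set, with multiset of sizes $\{|C|:C\in\mathcal{C}_k\}$, and the remaining $\mathcal{M}(k)-|\mathcal{C}_k|$ contain none; this breakdown is determined entirely by $(\mathcal{M},\mathcal{C})$ and is the same for every $\mathcal{G}$ satisfying the constraint. Hence the number of $\pi$ with $\mathcal{G}[\pi]=\mathcal{G}$ satisfying $(\mathcal{M},\mathcal{C})$ equals
\[
\prod_{k\geq 1}\Bigl((k-1)!\Bigr)^{\mathcal{M}(k)-|\mathcal{C}_k|}\cdot\prod_{k\geq 1}\prod_{C\in\mathcal{C}_k}(k-|C|)!\,,
\]
which is manifestly independent of $\mathcal{G}$, proving the lemma. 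I expect the main obstacle to be purely bookkeeping rather than conceptual: one must verify that the correspondence between $\pi$ and "a cyclic ordering per group" is genuinely a bijection (no double counting, using the uniqueness in Proposition~1 of the cycle decomposition), and that the constraint conditions on $\pi$ translate exactly into the "consecutive block" condition on the cyclic orderings — in particular checking that the increasing-order convention on the elements of $C$ makes the count well-defined and that distinct $C$'s landing in distinct groups means the per-group counts never interact.
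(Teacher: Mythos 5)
Your proof is correct, and since the paper merely asserts this lemma as ``deduced straightforwardly'' without giving an argument, your counting proof is exactly the natural fleshing-out of what the authors had in mind: a permutation with $\mathcal{G}[\pi]=\mathcal{G}$ factors into an independent choice of cyclic ordering per group, each group of size $k$ contains at most one constraint set $C$ (by the disjointness conditions on $\mathcal{C}$), and the per-group counts $(k-1)!$ resp.\ $(k-|C|)!$ depend only on $(\mathcal{M},\mathcal{C})$. The explicit product formula and the edge-case checks ($h=1$, $h=k$) are a welcome bonus beyond what the paper records.
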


Now our claim follows by combining the last two lemmas: Namely, for any two groupings $\mathcal{G},\mathcal{G}'$ satisfying the constraint $(\mathcal{M},\mathcal{C})$, the number of permutations $\rho$ satisfying the constraint $(\mathcal{M},\mathcal{C})$ that specifies the grouping $\mathcal{G}$ is equal to the number of those permutations that specifies the grouping $\mathcal{G}'$, and those permutations $\rho$ are chosen with equal probability.
This completes the proof.

\subsection{Proof of Security}
In this subsection, we prove the security of our secure grouping protocol as follows:
\begin{theo}
Let $(\mathcal{M},\mathcal{C})$ be a possible constraint for our secure grouping protocol.
Let $\mathcal{G}$ denote the grouping which is the output of our secure grouping protocol with constraint $(\mathcal{M},\mathcal{C})$.
Then, for any Player $i$, the information obtained by the player during the protocol is independent of the groups $A \in \mathcal{G}$ that do not contain $i$.
\end{theo}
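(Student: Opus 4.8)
The plan is to identify precisely what Player $i$ observes, strip off the part that is independent of the whole grouping, and reduce the remainder to a combinatorial product decomposition of the set of constraint-respecting permutations. First I would argue that Player $i$'s entire observation is the pair $(R,F_i)$, where $R$ consists of the $k-1$ card sequences opened at Step 3 of the permutation division sub-protocols invoked inside the permutation randomizing protocol (a scan of the protocols shows nothing else is ever turned face up, and that the subsequent column rearrangements are public operations determined by $R$), and $F_i=(\rho^{-1}(i),\rho^{-2}(i),\dots,\rho^{-(k-1)}(i))$ is the list of numbers on the $k-1$ cards Player $i$ picks up at the final step. Note that $F_i$ is exactly the cycle $\rho_i$ of $\rho$ through $i$, read backwards from $i$; in particular $F_i$ determines the set $A[i]$, which in general is correlated with the other groups (it excludes the members of $A[i]$ from them). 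Hence the statement must be read in the only sensible way: conditioned on $A[i]$, the observation $(R,F_i)$ is independent of $\mathcal{G}^{-i}:=\mathcal{G}\setminus\{A[i]\}$, i.e.\ Player $i$ learns nothing about the remaining groups beyond what is already forced by knowing its own group.

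Second, by Proposition~\ref{prop:security_permutation_randomizing} the transcript $R$ is uniform and independent of $\sigma$, hence of $\rho=\sigma^{-1}\tau\sigma$ and of every quantity derived from $\rho$ — in particular jointly independent of the triple $(A[i],\mathcal{G}^{-i},F_i)$. Factoring the joint law of $(R,F_i,A[i],\mathcal{G}^{-i})$, it therefore suffices to prove that $F_i$ and $\mathcal{G}^{-i}$ are conditionally independent given $A[i]$; adding back the independent factor $R$ then gives the claim for $(R,F_i)$.

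Third — the crux — I would establish the product structure. As shown in the correctness analysis, $\rho$ is uniform over the set $V$ of permutations of $\{1,\dots,n\}$ satisfying $(\mathcal{M},\mathcal{C})$. Fix a value $A$ of $A[i]$ occurring with positive probability; then each $\rho\in V$ whose cycle through $i$ has cyclic area $A$ factors uniquely as $\rho=\rho_i\,\rho''$, where $\rho_i$ is a cycle on $A$ and $\rho''$ is the permutation that $\rho$ induces on $\{1,\dots,n\}\setminus A$. I claim the requirement ``$\rho\in V$ and the cyclic area of the cycle through $i$ is $A$'' is equivalent to the conjunction of a condition on $\rho_i$ alone and a condition on $\rho''$ alone. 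For the cycle type: the type of $\rho$ is the type of the $|A|$-cycle $\rho_i$ together with the type of $\rho''$, so the prescribed type corresponds to ``$\rho_i$ is an $|A|$-cycle'' plus ``$\rho''$ has the residual type''. For the constraint sets: since $A$ is a group in some valid grouping and each $C\in\mathcal{C}_k$ lies inside a single group in any valid grouping, every $C$ is either contained in $A$ or disjoint from $A$; moreover at most one $C$ can be contained in $A$, it necessarily lies in $\mathcal{C}_{|A|}$, and which one it is (if any) is determined by the set $A$ itself. The ``consecutive in increasing order'' requirement for that absorbed constraint set is a condition on $\rho_i$ only, while the same requirement for every other constraint set (all contained in $\{1,\dots,n\}\setminus A$) is a condition on $\rho''$ only. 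Consequently the set $\{\rho\in V:$ the cycle of $\rho$ through $i$ has area $A\}$ is a direct product $\mathrm{Cyc}_A\times\mathrm{Perm}_A$ of the admissible cycles on $A$ and the admissible permutations of the complement; conditioning the uniform $\rho$ on this event makes $\rho_i$ and $\rho''$ independent, each uniform on its own factor. Since $F_i$ is a function of $\rho_i$ (and of $i$) while $\mathcal{G}^{-i}$ is the set of cyclic areas of the cycles of $\rho''$, this yields the required conditional independence of $F_i$ and $\mathcal{G}^{-i}$ given $A[i]$, and together with the second step the proof is complete.

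The main obstacle is exactly this product decomposition: one must verify that fixing the set $A[i]$ really severs every link between the behaviour of $\rho$ inside $i$'s cycle and its behaviour outside. This reduces to three bookkeeping facts about $(\mathcal{M},\mathcal{C})$ as set up in Section~\ref{secure_grouping_protocol} — distinct groups are disjoint, so each constraint set is swallowed whole by one group or missed entirely; a group of size $|A|$ absorbs at most one constraint set, and only one from $\mathcal{C}_{|A|}$, determined by $A$ itself; and the ``increasing order'' requirement for an absorbed constraint set restricts only the internal cyclic order of $\rho_i$. None of these is deep, but they must be stated carefully, especially the identification of which constraint set a given $A$ absorbs. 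Everything else — the claim that $(R,F_i)$ exhausts Player $i$'s observation, the reduction to $F_i$ via Proposition~\ref{prop:security_permutation_randomizing}, and the final factoring of joint laws — is routine.
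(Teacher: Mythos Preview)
Your proposal is correct and follows essentially the same approach as the paper: identify Player $i$'s view as the opened transcript plus the cycle $\rho_i$, invoke Proposition~\ref{prop:security_permutation_randomizing} to discard the transcript, and then argue that the remaining groups are uniform conditioned on $\rho_i$. The paper packages the last step as Lemma~\ref{lem:other_component_of_grouping} (stated ``straightforwardly'' without proof) and conditions directly on the full cycle $\rho_i$, whereas you condition on the set $A[i]$ and explicitly verify the product decomposition $\mathrm{Cyc}_A\times\mathrm{Perm}_A$ by checking that both the cycle-type condition and every constraint-set condition split cleanly across $A$ and its complement; your version is thus a more detailed execution of the same idea, and in particular your bookkeeping about which constraint sets are absorbed by $A$ is exactly what the paper's ``straightforwardly'' is hiding.
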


To prove the theorem, we first note that, the argument in the proof of Lemma \ref{lem:correspondence_of_permutation_and_grouping} implies that the output of Player $i$ in the secure grouping protocol is the sequence of numbers $(\rho^{-1}(i),\rho^{-2}(i),\dots,\rho^{-(k-1)}(i))$, where $\rho$ is the permutation generated (in the committed form) in the protocol.
Let $\rho_i$ denote the unique cyclic permutation involved in $\rho$ that contains the number $i$.
Then, by using the output above, Player $i$ can recover not only the cyclic area of $\rho_i$ (which is an \emph{unordered} set) but also the whole of the cyclic permutation $\rho_i$ itself.
Therefore, the information obtained by Player $i$ during the protocol is the cyclic permutation $\rho_i$ as well as the card sequences that are opened during the permutation randomizing protocol.
Moreover, Proposition \ref{prop:security_permutation_randomizing} implies that the latter cards opened during the permutation randomizing protocol provides essentially no information, therefore it suffices to concern the information on the cyclic permutation $\rho_i$ only.

Now the following property is deduced straightforwardly by the definition of the grouping $\mathcal{G}[\pi]$ specified by a permutation $\pi$:

\begin{lem}
\label{lem:other_component_of_grouping}
Let $i$ and $\rho_i$ be as above.
Let $\mathcal{G}'$ and $\mathcal{G}''$ be any grouping satisfying the constraint $(\mathcal{M},\mathcal{C})$, in which the group including $i$ is equal to the cyclic area of $\rho_i$.
Then, among the permutations $\widetilde{\rho}$ whose decomposition into disjoint cyclic permutations involves $\rho_i$, the number of those permutations that satisfies $\mathcal{G}[\widetilde{\rho}] = \mathcal{G}'$ is equal to the number of those permutations $\widetilde{\rho}$ that satisfies $\mathcal{G}[\widetilde{\rho}] = \mathcal{G}''$.
\end{lem}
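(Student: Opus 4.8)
The plan is to compute, for an arbitrary grouping $\mathcal{G}$ that contains $A_i$ as one of its groups (write $A_i$ for the cyclic area of $\rho_i$), the number of permutations $\widetilde{\rho}$ that have $\rho_i$ as one of the disjoint cyclic permutations in their decomposition and satisfy $\mathcal{G}[\widetilde{\rho}]=\mathcal{G}$, and then to show that this number depends only on $A_i$ and on the multiset of sizes of the remaining groups of $\mathcal{G}$. Since the lemma's hypothesis gives that $\mathcal{G}'$ and $\mathcal{G}''$ both satisfy $(\mathcal{M},\mathcal{C})$ and both contain $A_i$, they share exactly this data, and the two counts then coincide.

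The main step is the count itself, which rests on the uniqueness of the decomposition into disjoint cyclic permutations. A permutation $\widetilde{\rho}$ with $\mathcal{G}[\widetilde{\rho}]=\mathcal{G}$ must fix each group of $\mathcal{G}$ setwise, hence is exactly a product $\widetilde{\rho}=\prod_{A\in\mathcal{G}}\widetilde{\rho}_A$ in which $\widetilde{\rho}_A$ is the restriction of $\widetilde{\rho}$ to $A$ and is forced to be a single cyclic permutation with cyclic area exactly $A$ (using the convention of a ``cycle of length one'' when $|A|=1$); conversely every such product has $\mathcal{G}[\widetilde{\rho}]=\mathcal{G}$, and distinct tuples $(\widetilde{\rho}_A)_{A\in\mathcal{G}}$ give distinct $\widetilde{\rho}$. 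Requiring in addition that $\rho_i$ occur among the cyclic factors of $\widetilde{\rho}$ forces $\widetilde{\rho}_{A_i}=\rho_i$, because $A_i$ is the cyclic area of $\rho_i$ and is itself the unique group of $\mathcal{G}$ on that set of numbers; and for each other $A\in\mathcal{G}\setminus\{A_i\}$ the factor $\widetilde{\rho}_A$ may be chosen freely among the $(|A|-1)!$ cyclic permutations on $A$. Hence
\[
\#\{\,\widetilde{\rho}\in S_n : \rho_i \text{ is a cyclic factor of } \widetilde{\rho} \text{ and } \mathcal{G}[\widetilde{\rho}]=\mathcal{G}\,\}=\prod_{A\in\mathcal{G}\setminus\{A_i\}}(|A|-1)!\,.
\]

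Applying this to $\mathcal{G}=\mathcal{G}'$ and $\mathcal{G}=\mathcal{G}''$, it only remains to note that the two products are equal. This is immediate from the hypothesis: since $\mathcal{G}'$ and $\mathcal{G}''$ both satisfy $(\mathcal{M},\mathcal{C})$ we have $|\mathcal{G}'_m|=\mathcal{M}(m)=|\mathcal{G}''_m|$ for every $m\geq 1$, so $\mathcal{G}'$ and $\mathcal{G}''$ have the same multiset of group sizes; deleting the common group $A_i$ (of a fixed size) from each leaves $\mathcal{G}'\setminus\{A_i\}$ and $\mathcal{G}''\setminus\{A_i\}$ with the same multiset of sizes, so the products agree, which proves the lemma. (The same argument works verbatim if one additionally restricts $\widetilde{\rho}$ to satisfy $(\mathcal{M},\mathcal{C})$ in the permutation sense: for a group $A$ containing a set $C\in\mathcal{C}$ the factor $\widetilde{\rho}_A$ is then confined to cyclic permutations realizing the prescribed cyclic order on $C$, and by relabelling the number of these depends only on $|A|$ and $|C|$; as $\mathcal{G}'$ and $\mathcal{G}''$ satisfy the same constraint, the relevant multiset of pairs $(|A|,|C|)$ still agrees after deleting $A_i$.) I expect no real difficulty in this proof; the only point that deserves attention is the factorisation of the count over the groups of $\mathcal{G}$ with a per-group factor independent of the actual numbers in the group, together with the observation that specifying $\rho_i$ pins down exactly the factor on $A_i$ and nothing else — both of which are just restatements of the uniqueness of the disjoint-cycle decomposition.
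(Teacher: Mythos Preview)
Your argument is correct and is precisely the explicit version of what the paper leaves implicit: the paper does not give a proof of this lemma at all, merely stating that it ``is deduced straightforwardly by the definition of the grouping $\mathcal{G}[\pi]$ specified by a permutation $\pi$''. Your factorisation of the count as $\prod_{A\in\mathcal{G}\setminus\{A_i\}}(|A|-1)!$ via the uniqueness of the disjoint-cycle decomposition, together with the observation that this depends only on the multiset of remaining group sizes (which is determined by $\mathcal{M}$ and $|A_i|$), is exactly the kind of straightforward deduction the paper has in mind; your parenthetical extension to the constrained case is a welcome addition that the paper does not spell out either.
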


Since the choice of the permutation $\rho$ is uniformly random, it follows by Lemma \ref{lem:other_component_of_grouping} that the conditional distribution of the grouping $\mathcal{G}$ generated by our secure grouping algorithm, except the group including $i$, conditioned on the choice of the cyclic permutation $\rho_i$ is still the uniform distribution.
This completes the proof.

\paragraph{\bf Acknowledgement}
We thank the members of Shin-Akarui-Angou-Benkyou-Kai for their helpful comments.

\bibliographystyle{abbrv}
\bibliography{cardbib}

\end{document}